\documentclass[journal]{IEEEtranTIE}
\usepackage{graphicx}
\pdfoutput=1
\usepackage{cite}
\usepackage{picinpar}
\usepackage{amsmath}
\usepackage{amssymb}
\usepackage{url}
\usepackage{flushend}
\usepackage[latin1]{inputenc}
\usepackage{colortbl}
\usepackage{soul}
\usepackage{multirow}
\usepackage{pifont}
\usepackage{color}
\usepackage{alltt}
\usepackage[hidelinks]{hyperref}
\usepackage{enumerate}
\usepackage{siunitx}
\usepackage{breakurl}
\usepackage{epstopdf}
\usepackage{pbox}
\usepackage{subfigure}
\usepackage{balance}

\newtheorem{assumption}{Assumption}
\newtheorem{lemma}{Lemma}
\newtheorem{remark}{Remark}

\newtheorem{theorem}{Theorem}

\newenvironment{proof}{{\indent \indent \it Proof:}}

\begin{document}
\title{	Globally Intelligent Adaptive Finite-/Fixed- Time Tracking Control for Strict-Feedback Nonlinear Systems via Composite Learning Approaches}

\author{
	\vskip 1em
	
	Xidong Wang

	\thanks{
					
		Xidong Wang is with the Research Institute of Intelligent Control and Systems, School of Astronautics, Harbin Institute of Technology, Harbin 150001, China (e-mail: 17b904039@stu.hit.edu.cn). 
	}
}

\maketitle
	
\begin{abstract}
This article focuses on the globally composite adaptive law-based intelligent finite-/fixed- time (FnT/FxT) tracking control issue for a family of uncertain strict-feedback nonlinear systems. First, intelligent approximators with  new composite updating laws are developed to model uncertain nonlinear terms, which encompass prediction errors to enhance intelligent approximators' learning behaviors and fewer online learning parameters to diminish computational burden. Then, a novel smooth switching function coupled with robust controllers is designed to pull system states back when the transients are out of the approximators' active domain. After that, a modified FnT/FxT backstepping technique is constructed to render output to follow the reference trajectory, and an adaptive law is employed to alleviate the impact of external disturbances. Moreover, input nonlinearities are considered in the design of control laws. In view of the known actuator magnitude and rate saturations, a novel adaptive fast FxT auxiliary variable system is established to effectively diminish saturation time, and further alleviate the impact of the auxiliary signal on the tracking error. Furthermore, when the input is subject to unknown magnitude saturation, rate saturation and dead-zone simultaneously, a smooth function is introduced to approximate the non-smooth input nonlinearities, where the unknown parameters are modeled via the presented global intelligent approximators. It is theoretically confirmed that the proposed control strategies ensure globally FnT/FxT boundedness of all the closed-loop variables. Finally, the validity of theoretical results is testified via a simulation case.
\end{abstract}

\begin{IEEEkeywords}
Composite updating law, smooth switching function, finite-/fixed- time backstepping, global stability, strict-feedback nonlinear system.
\end{IEEEkeywords}

{}

\definecolor{limegreen}{rgb}{0.2, 0.8, 0.2}
\definecolor{forestgreen}{rgb}{0.13, 0.55, 0.13}
\definecolor{greenhtml}{rgb}{0.0, 0.5, 0.0}

\section{Introduction}

\IEEEPARstart{I}{n} the past few decades, the tracking control issue of nonlinear systems with strict feedback has been an active topic. As one of the most powerful control design techniques, backstepping control framework provides a systematic solution for tracking control of strict-feedback systems \cite{DSC,CFB,ACFB}. To tackle complex uncertainties in the system, intelligent approximators, embracing neutral network (NN) and fuzzy logic system (FLS), are introduced into the backstepping technique \cite{neural1,neural2,fuzzy1,fuzzy2}. Due to the high correlation between tracking performance and intelligent estimation performance, growing research attention has been concentrated on the precision of neural/fuzzy approximators in modeling unknown dynamics recently. The authors of \cite{composite1} developed a composite NN-based backstepping algorithm, where the prediction error extracted from serial-parallel estimation model (SPEM) was added to the neural adaptive law for obtaining better approximation performance. Subsequently, this approach of constructing prediction errors to reinforce estimation capabilities was applied to various control schemes to enhance control performance \cite{composite2,composite3}. It is worth noting that the prerequisite for excellent estimation performance is that the intelligent approximators should retain valid at all times. Once a transient transcends the active region of the intelligent approximator, tracking performance slumps, and even the system becomes unstable. To fulfill the global stability of intelligent control, a switching protocol with extra robust control laws was designed to drive transients back to the neural working domain \cite{global1}. Afterwards, a multitude of improved globally neural/fuzzy control strategies were established to diminish input energy and attain superior intelligent approximation ability \cite{global2,global3,global4,global5,global6}. However, most of the above-mentioned approaches are infinite-time stability. 

To acquire rapid transient response and strong robustness in practical tracking control, the FnT/FxT backstepping techniques are proposed \cite{fnt1,fnt2,fnt3,fnt4,fnt5,fnt_own,fxt2,fxt3,fxt4,fxt5}. Additionally, the settling time boundary of the FxT backstepping approach is irrespective of initial conditions \cite{fxt1,fxt6}. Moreover, the authors of \cite{F_FxT} proposed a practical fast FxT (F-FxT) control strategy. In \cite{fnt_composite1}, a composite neural FnT controller was developed, which introduced prediction errors to strengthen RBFNNs' capability of identifying uncertainties. The article \cite{fnt_composite2} presented a switching control strategy between a prediction error-based intelligent control law and a robust controller to achieve globally FnT stability, where the prediction error is extracted from a modified SPEM. The authors in \cite{fxt_composite1} designed a neural adaptive FxT backstepping algorithm with composite learning laws to enhance NNs' performance. Nevertheless, the computational burden of the composite adaptive law updating is heavy, since the number of composite update weights is closely related to the number of neurons in the RBFNN.

When designing controllers for physical systems, it is a common phenomenon to encounter input nonlinearities, embracing saturation, backlash and dead-zone. To deal with the issue of input saturations with known parameters, various kinds of auxiliary signal compensators were exploited \cite{sat_a1,sat_a2,sat_f1,sat_f2}. The authors of \cite{sat_fa} analyzed the impact of compensation system gain on the saturation compensation performance in detail, and then developed an innovative adaptive FxT anti-saturation compensator. For unknown input nonlinearities, smooth functions were constructed to model the non-smooth actuator nonlinearities, and the unknown parameters were estimated by adopting intelligent approximators \cite{sat_NN1,sat_NN2,sat_zone_fuzzy1,sat_zone_fuzzy2}. 

Motivated by the preceding statements, this article establishes globally composite adaptive intelligent FnT/FxT backstepping control schemes for the tracking control of uncertain strict-feedback nonlinear systems subject to input nonlinearities and external disturbances. First, intelligent approximators with novel composite learning laws are designed to identify uncertain nonlinear functions. Then, a new smooth switching mechanism is constructed to drive system states back when the transients are out of the approximators' valid domain. Next, an improved FnT/FxT backstepping protocol is developed to enable the output to track the reference variable. Stability analysis implicates that all signals in the closed-loop system are globally FnT/FxT bounded. The core contributions of this article are listed as follows:
\begin{enumerate}[1)]
	\item New composite learning approaches based on prediction errors are developed for weight updating of intelligent approximators with fewer online learning parameters, which can simultaneously enhance tracking performance and alleviate computational burden.
	\item A novel smooth switching function coupled with robust controllers is developed to bring system states back when the transients are out of the approximators' working domain. In addition, the condition that the boundaries of the uncertain nonlinear terms require to be known in advance is relaxed.
    \item The multiple input nonlinearities are taken into account in the design of controllers. When the input contains known magnitude and rate saturations, a novel adaptive F-FxT auxiliary variable system is designed to effectively diminish saturation time, and further weaken the impact of the auxiliary signal on the tracking error. When the input is simultaneously subject to unknown magnitude saturation, rate saturation and dead-zone, a smooth function is introduced to approximate the non-smooth actuator nonlinearities, where the unknown parameters are modeled by means of the presented global intelligent approximators.
\end{enumerate}

In this article, we select radial basis function neural network (RBFNN) as the intelligent approximator, while the control strategies obtained by replacing RBFNN with FLS are still valid.

Notation: $\left\|  \cdot  \right\|$ represents the Euclidean norm.
\section{Problem formulation and lemmas}
\subsection{Problem Formulation}
Consider a class of strict-feedback nonlinear systems with the following dynamics
\begin{equation}
\left\{  \begin{aligned}
&{{\dot \rho }_i} = {h_i}\left( {{{\bar \rho }_i}} \right) + {g_i}\left( {{{\bar \rho }_i}} \right){\rho _{i + 1}} + {d_i},i = 1,2, \ldots ,n - 1\\
&{{\dot \rho }_n} = {h_n}\left( {{{\bar \rho }_n}} \right) + {g_n}\left( {{{\bar \rho }_n}} \right)u + {d_n}\\
&y = {\rho _1}
 \end{aligned} \right.
\end{equation}
where ${\bar \rho _i} = {\left[ {{\rho _1},{\rho _2}, \ldots ,{\rho _i}} \right]^T} \in {\mathbb{R}}^i$,$\left( {i = 1,2, \ldots ,n} \right)$, stand for the state vectors. $y \in {\mathbb{R}}$ and $u \in {\mathbb{R}}$ are the system output and input variables, respectively. ${g_i}\left(  \cdot  \right)$ denote the known nonzero control gains with $0 < \underline g < \left| {{g_i}\left(  \cdot  \right)} \right| < \bar g$ (Without loss of generality, assume ${g_i}\left(  \cdot  \right)>0$). The unknown uncertainties ${h_i}\left(  \cdot  \right)$ are smooth. $d_i$ represent unknown bounded external disturbances. 

The control mission of this article is to establish globally composite adaptive law-based neural FnT/FxT control strategies to ensure that all signals in the closed-loop system are globally FnT/FxT bounded, and the tracking error tends to an arbitrarily small domain near zero for finite/fixed time.  
\begin{assumption}
The given reference variable ${y_r}(t)$ and its first derivative ${\dot y_r}(t)$ are bounded and smooth.
\end{assumption}
\begin{assumption}
For uncertain nonlinearities ${h_i}\left( {{{\bar \rho }_i}} \right)$, there exist known positive smooth functions ${H_i}\left( {{{\bar \rho }_i}} \right)$ and unknown constants ${\tau _i} \ge 0$ satisfying $\left| {{h_i}\left( {{{\bar \rho }_i}} \right)} \right| \le {\tau _i}{H_i}\left( {{{\bar \rho }_i}} \right)$.
\end{assumption}
\begin{remark}
If we select ${H_i}\left( {{{\bar \rho }_i}} \right) \equiv 1$, then the assumption of uncertain nonlinearities becomes $\left| {{h_i}\left( {{{\bar \rho }_i}} \right)} \right| \le {\tau _i}$, which is a common assumption condition for system (1).
\end{remark}
\subsection{Lemmas}
\begin{lemma} [\cite{fxt2}]
The following inequalities hold for any ${x_s} \in {\mathbb{R}},s = 1,2, \ldots ,p$
\begin{equation}
\begin{aligned}
{\left( {\sum\limits_{s = 1}^p {\left| {{x_s}} \right|} } \right)^{q_1}} \le \sum\limits_{s = 1}^p {{{\left| {{x_s}} \right|}^{q_1}}},&\quad 0 < q_1 \le 1\\
{p^{1 - q_2}}} {\left( {\sum\limits_{s = 1}^p {\left| {{x_s}} \right|} } \right)^{q_2}} \le \sum\limits_{s = 1}^p {{{\left| {{x_s}} \right|}^{q_2}},&\quad q_2 > 1
\end{aligned}
\end{equation}
\end{lemma}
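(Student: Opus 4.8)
The plan is to treat the two inequalities separately, each as an immediate consequence of an elementary monotonicity/convexity property of the map $t \mapsto t^{q}$ on $[0,\infty)$; no backstepping machinery is needed here.

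For the first inequality (the regime $0 < q_1 \le 1$), I would first dispose of the trivial case in which every $x_s = 0$, where both sides vanish. Otherwise put $\sigma = \sum_{s=1}^p |x_s| > 0$ and note that $|x_s|/\sigma \in [0,1]$ for each $s$. Since $0 < q_1 \le 1$, raising a number in $[0,1]$ to the power $q_1$ does not decrease it, so $(|x_s|/\sigma)^{q_1} \ge |x_s|/\sigma$. Summing over $s$ gives $\sum_{s=1}^p (|x_s|/\sigma)^{q_1} \ge \sum_{s=1}^p |x_s|/\sigma = 1$, and multiplying through by $\sigma^{q_1}$ yields $\sum_{s=1}^p |x_s|^{q_1} \ge \sigma^{q_1} = \left(\sum_{s=1}^p |x_s|\right)^{q_1}$, which is the claim. (Equivalently one may induct on $p$ using subadditivity $(a+b)^{q_1} \le a^{q_1}+b^{q_1}$, itself proved by the same normalization.)

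For the second inequality (the regime $q_2 > 1$), I would invoke convexity of $\phi(t) = t^{q_2}$ on $[0,\infty)$, which holds since $\phi''(t) = q_2(q_2-1)t^{q_2-2} \ge 0$. Applying Jensen's inequality to the uniform average of the points $|x_1|,\ldots,|x_p|$ gives $\left(\tfrac{1}{p}\sum_{s=1}^p |x_s|\right)^{q_2} \le \tfrac{1}{p}\sum_{s=1}^p |x_s|^{q_2}$; multiplying both sides by $p$ and using $p\cdot p^{-q_2} = p^{1-q_2}$ produces $p^{1-q_2}\left(\sum_{s=1}^p |x_s|\right)^{q_2} \le \sum_{s=1}^p |x_s|^{q_2}$, as desired. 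This is equivalently the power-mean inequality $M_1 \le M_{q_2}$, or a one-line application of H\"older's inequality with conjugate exponents $q_2$ and $q_2/(q_2-1)$ to the vectors $(|x_s|)_{s=1}^p$ and $(1,\ldots,1)$.

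I do not expect any real obstacle: both parts are classical. The only points deserving a little care are the degenerate case $\sum_{s=1}^p |x_s| = 0$ in the first inequality (settled by inspection) and keeping the direction of the scalar bound straight --- $t^{q_1} \ge t$ on $[0,1]$ when $0<q_1\le 1$, versus $t^{q_2} \le t$ on $[0,1]$ when $q_2>1$ --- so that each regime uses the correct orientation. I would present the two cases as two short displayed arguments.
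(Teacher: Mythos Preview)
Your argument is correct in both regimes: the normalization $|x_s|/\sigma\in[0,1]$ together with $t^{q_1}\ge t$ on $[0,1]$ handles the subadditive case, and Jensen applied to the convex map $t\mapsto t^{q_2}$ on $[0,\infty)$ with the uniform probability weights yields the second inequality after clearing the factor $p^{q_2}$.

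As for comparison with the paper: there is nothing to compare against. The paper does not prove this lemma; it merely states it and attributes it to an external reference. Your write-up therefore supplies strictly more than the paper does, namely a self-contained elementary proof in place of a bare citation. If you want to match the paper's level of detail you could simply cite the same source, but the two short displayed arguments you outline are standard, clean, and cost only a few lines, so there is no reason not to include them.
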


\begin{lemma} [\cite{lemma2}]
For $W \in {\mathbb{R}}$ and $\eta >0$, it holds that
\begin{equation}
0 \le \left| W \right| - W\tanh \left( {\frac{W}{\eta }} \right) \le 0.2785\eta  
\end{equation}
\end{lemma}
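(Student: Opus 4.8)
The plan is to exploit the oddness of $\tanh(\cdot)$ to remove the sign of $W$ and then reduce the two-sided estimate to a one-dimensional optimization. Since $\tanh$ is odd, $W\tanh(W/\eta)=|W|\tanh(|W|/\eta)$ for every $W\in\mathbb{R}$ and $\eta>0$, so it suffices to analyze $\varphi(W):=|W|-|W|\tanh(|W|/\eta)=|W|\bigl(1-\tanh(|W|/\eta)\bigr)$.

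For the lower bound, $|W|/\eta\ge 0$ gives $\tanh(|W|/\eta)\in[0,1)$, hence $1-\tanh(|W|/\eta)>0$ and therefore $\varphi(W)\ge 0$, with equality only at $W=0$.

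For the upper bound, I would use the identity $1-\tanh(x)=\frac{2}{e^{2x}+1}$ to write $\varphi(W)=\frac{2|W|}{e^{2|W|/\eta}+1}$, and substitute $v=|W|/\eta\ge 0$ so that $\varphi=\eta\,\psi(v)$ with $\psi(v)=\frac{2v}{e^{2v}+1}$. As $\psi$ is smooth on $[0,\infty)$ with $\psi(0)=0$ and $\psi(v)\to0$ as $v\to\infty$, its supremum is attained at an interior critical point $v^\ast$. Clearing denominators in $\psi'(v)=0$ reduces it to $e^{-2v}=2v-1$; using this relation to eliminate the exponential at $v^\ast$ yields the closed form $\psi(v^\ast)=2v^\ast-1$, and a numerical evaluation gives $2v^\ast-1\approx 0.2785$. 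Hence $\varphi(W)=\eta\,\psi(v)\le\eta\,\psi(v^\ast)\le 0.2785\,\eta$ for all $W$.

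The main obstacle is the optimization step: one must verify that $e^{-2v}=2v-1$ has a unique admissible root (it forces $v>\tfrac12$, where the left side strictly decreases and the right side strictly increases), confirm via the boundary behavior of $\psi$ that this critical point is a maximum rather than a minimum, and accept $0.2785$ as a rounded numerical value of the exact maximum $2v^\ast-1$. The remaining manipulations are elementary.
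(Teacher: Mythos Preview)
Your argument is correct. The reduction via oddness to $v=|W|/\eta\ge 0$, the identity $1-\tanh(x)=2/(e^{2x}+1)$, the critical-point equation $e^{-2v}=2v-1$, and the closed form $\psi(v^\ast)=2v^\ast-1$ all check out; the numerical constant $0.2785$ is the standard rounded value of $2v^\ast-1$ for the unique root $v^\ast\approx 0.6392$.

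Regarding comparison with the paper: the paper does not supply its own proof of this lemma. It is stated as a cited auxiliary result and used as a black box in the subsequent stability analysis. Your write-up therefore goes beyond what the paper offers by giving a self-contained derivation of the bound, including the provenance of the constant $0.2785$. The only cosmetic point is that the inequality in the paper (and in the original reference) is usually stated with a non-strict upper bound $\le 0.2785\eta$, where $0.2785$ is understood as a convenient rounding of the exact supremum; you might make explicit that the true maximum is the transcendental number $2v^\ast-1<0.2785$, so the stated inequality is indeed non-strict.
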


\begin{lemma} [\cite{lemma3}]
For any ${\Upsilon _1},{\Upsilon _2} \in {\mathbb{R}}$, and positive constants $s_1,s_2$, one gets  
\begin{equation}
{\left| {{\Upsilon _1}} \right|^{{s_1}}}{\left| {{\Upsilon _2}} \right|^{{s_2}}} \le \frac{{{s_1}}}{{{s_1}{\rm{ + }}{s_2}}}{\left| {{\Upsilon _1}} \right|^{{s_1}{\rm{ + }}{s_2}}}{\rm{ + }}\frac{{{s_2}}}{{{s_1}{\rm{ + }}{s_2}}}{\left| {{\Upsilon _2}} \right|^{{s_1}{\rm{ + }}{s_2}}}
\end{equation}
\end{lemma}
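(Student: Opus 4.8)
The plan is to recognize the claimed bound as the weighted form of Young's inequality and to derive it from the scalar Young inequality $ab \le a^p/p + b^q/q$, which holds for all $a,b \ge 0$ whenever $p,q>1$ satisfy $1/p + 1/q = 1$. First I would dispose of the degenerate cases $\Upsilon_1 = 0$ or $\Upsilon_2 = 0$: in either case the left-hand side vanishes while the right-hand side is a sum of nonnegative terms, so the inequality holds trivially. Hence it suffices to treat $\Upsilon_1 \neq 0$ and $\Upsilon_2 \neq 0$.

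Next I would choose the exponents $p = (s_1+s_2)/s_1$ and $q = (s_1+s_2)/s_2$. Since $s_1,s_2>0$ we have $p,q>1$, and $1/p + 1/q = s_1/(s_1+s_2) + s_2/(s_1+s_2) = 1$, so $(p,q)$ is a conjugate pair. Applying Young's inequality with $a = |\Upsilon_1|^{s_1}$ and $b = |\Upsilon_2|^{s_2}$ yields $|\Upsilon_1|^{s_1}|\Upsilon_2|^{s_2} \le \tfrac{1}{p}|\Upsilon_1|^{s_1 p} + \tfrac{1}{q}|\Upsilon_2|^{s_2 q}$. Substituting the identities $s_1 p = s_1 + s_2$, $s_2 q = s_1 + s_2$, $1/p = s_1/(s_1+s_2)$, and $1/q = s_2/(s_1+s_2)$ then gives exactly $|\Upsilon_1|^{s_1}|\Upsilon_2|^{s_2} \le \tfrac{s_1}{s_1+s_2}|\Upsilon_1|^{s_1+s_2} + \tfrac{s_2}{s_1+s_2}|\Upsilon_2|^{s_1+s_2}$, as required.

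If a fully self-contained argument that does not invoke Young's inequality is preferred, I would keep in reserve a proof via concavity of the logarithm: set $\lambda = s_1/(s_1+s_2) \in (0,1)$, $x = |\Upsilon_1|^{s_1+s_2}$, $y = |\Upsilon_2|^{s_1+s_2}$, and note that $\log\!\left(|\Upsilon_1|^{s_1}|\Upsilon_2|^{s_2}\right) = \lambda \log x + (1-\lambda)\log y \le \log\!\left(\lambda x + (1-\lambda) y\right)$ by Jensen's inequality; applying the increasing function $\exp$ to both sides recovers the claim. The inequality itself is routine; the only step that warrants care is the exponent bookkeeping — verifying $\lambda(s_1+s_2) = s_1$ and $(1-\lambda)(s_1+s_2) = s_2$ so that the logarithm identity is exact — together with the explicit treatment of the $\Upsilon_i = 0$ cases, which is the part most easily overlooked.
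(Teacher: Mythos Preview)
Your argument is correct: choosing conjugate exponents $p=(s_1+s_2)/s_1$, $q=(s_1+s_2)/s_2$ and applying Young's inequality to $a=|\Upsilon_1|^{s_1}$, $b=|\Upsilon_2|^{s_2}$ yields the claim exactly, and the backup via concavity of $\log$ is equally valid. Note, however, that the paper does not supply its own proof of this lemma at all --- it is simply quoted from the cited reference --- so there is no in-paper argument to compare against; your derivation stands on its own as a complete justification.
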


\begin{lemma} [\cite{fnt_own}]
For any $\lambda \in {\mathbb{R}}$ and positive constants $\mu ,\kappa $, there holds
\begin{equation}
0 \le \left| \lambda  \right| - {\lambda ^2}\sqrt {\frac{{{\lambda ^2} + {\mu ^2} + {\kappa ^2}}}{{\left( {{\lambda ^2} + {\mu ^2}} \right)\left( {{\lambda ^2} + {\kappa ^2}} \right)}}}  < \frac{{\mu \kappa }}{{\sqrt {{\mu ^2} + {\kappa ^2}} }} 
\end{equation}
\end{lemma}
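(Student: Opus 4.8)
The plan is to strip away the square roots and the parameters by the substitution $a=\lambda^{2}\ge 0$, $b=\mu^{2}>0$, $c=\kappa^{2}>0$. Then $|\lambda|=\sqrt a$, the middle term rewrites as $a\sqrt{(a+b+c)/((a+b)(a+c))}=\sqrt a\cdot\sqrt{a(a+b+c)/((a+b)(a+c))}$ (using $a\ge 0$), and the claimed bound becomes $\mu\kappa/\sqrt{\mu^{2}+\kappa^{2}}=\sqrt{bc}/\sqrt{b+c}$. Setting $t:=a(a+b+c)/((a+b)(a+c))$, the quantity to estimate collapses to $\Phi(a)=\sqrt a\,(1-\sqrt t)$, and since $(a+b)(a+c)-a(a+b+c)=bc$ we get $1-t=bc/((a+b)(a+c))$, hence $t\in[0,1]$.

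For the left inequality it then suffices to note that $t\le 1$ gives $1-\sqrt t\ge 0$, so $\Phi(a)\ge 0$ (with equality only at $\lambda=0$, as $b,c>0$ preclude $t=1$). For the right inequality the idea is to avoid a calculus-based maximization of $\Phi$ altogether — differentiating $\Phi$ produces an awkward cubic in $a$ with coefficients in $b,c$ — and instead invoke the elementary bound $1-\sqrt t\le 1-t$ on $[0,1]$ (because $\sqrt t-t=\sqrt t\,(1-\sqrt t)\ge 0$ there). This yields $\Phi(a)\le\sqrt a\,(1-t)=\sqrt a\,bc/((a+b)(a+c))$, so it remains to show $\sqrt a\,bc/((a+b)(a+c))<\sqrt{bc}/\sqrt{b+c}$, i.e., $(a+b)(a+c)>\sqrt{a\,bc\,(b+c)}$.

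I would close this by handling $a=0$ separately (left side $\ge bc>0$, right side $0$) and, for $a>0$, writing $(a+b)(a+c)=a^{2}+ab+ac+bc>ab+ac+bc$ and verifying the squared inequality $(ab+ac+bc)^{2}\ge abc(b+c)$, which after expansion reduces to $a^{2}b^{2}+a^{2}c^{2}+b^{2}c^{2}+2a^{2}bc+ab^{2}c+abc^{2}\ge 0$, a sum of nonnegative monomials; translating back via $\sqrt{bc}/\sqrt{b+c}=\mu\kappa/\sqrt{\mu^{2}+\kappa^{2}}$ finishes the proof. The main delicacy I anticipate is calibrating how much slack to give up when passing from $\Phi(a)$ to $\sqrt a\,bc/((a+b)(a+c))$: combining $1-\sqrt t\le 1-t$ with the discarded $a^{2}$ term in $(a+b)(a+c)$ turns out to leave just enough room to land below $\mu\kappa/\sqrt{\mu^{2}+\kappa^{2}}$ while keeping the inequality strict and uniform in $\mu,\kappa$.
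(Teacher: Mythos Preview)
Your argument is correct. The substitution $a=\lambda^{2}$, $b=\mu^{2}$, $c=\kappa^{2}$ together with the identity $(a+b)(a+c)-a(a+b+c)=bc$ cleanly reduces the left inequality to $t\in[0,1)$, and for the right inequality the chain $1-\sqrt{t}\le 1-t$ followed by $(a+b)(a+c)>ab+ac+bc\ge\sqrt{abc(b+c)}$ is valid and preserves strictness (at $a=0$ the intermediate bound is $0$ but the final target is positive; for $a>0$ the dropped $a^{2}$ guarantees strictness).

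As for comparison: the paper does not supply a proof of this lemma at all --- it is merely quoted from an external reference, so there is no in-paper argument to set yours against. Your self-contained elementary proof, which sidesteps any calculus-based maximization via the two simple relaxations $1-\sqrt{t}\le 1-t$ and $(a+b)(a+c)>ab+ac+bc$, is a nice way to establish the bound directly.
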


To achieve smooth switching, we design the following novel $nth$-order continuous differentiable functions: 
\begin{equation}
{\varpi _j}\left( {{\rho _j}} \right) = 
\left\{
\begin{aligned}
&1,&\left| {{\rho _j}} \right| \le {c_{j1}}\\
&{\cos ^{n + 1}}\left( {\frac{\pi }{2}{{\left( {\frac{{\rho _j^2 - c_{j1}^2}}{{c_{j2}^2 - c_{j1}^2}}} \right)}^n}} \right),&otherwise\\
&0,&\left| {{\rho _j}} \right| \ge {c_{j2}}
\end{aligned} 
\right.
\end{equation}
and
\begin{equation}
{\bar \varpi _j}\left( {{\rho _j}} \right) = 
\left\{
\begin{aligned}
&1,&\left| {{\rho _j}} \right| \le {c_{j1}}\\
&{\cos ^{n + 1}}\left( {\frac{\pi }{2}{{\left( {\frac{{\left| {{\rho _j}} \right| - {c_{j1}}}}{{{c_{j2}} - {c_{j1}}}}} \right)}^n}} \right),&otherwise\\
&0,&\left| {{\rho _j}} \right| \ge {c_{j2}}
\end{aligned} 
\right.
\end{equation}
where $c_{j1}>0$ and $c_{j2}>0$ stand for the left and right boundaries of RBFNNs' valid domain, $j = 1,2, \ldots ,i$.

Then, the switch indicator functions for identifying whether the state variables are located in the active region of RBFNNs is introduced as follows: ${w_i}\left( {{{\bar \rho }_i}} \right) = \prod\limits_{j = 1}^i {{\varpi _j}\left( {{\rho _j}} \right)}$ or ${w_i}\left( {{{\bar \rho }_i}} \right) = \prod\limits_{j = 1}^i {{{\bar \varpi }_j}\left( {{\rho _j}} \right)}$.
\begin{remark}
The switch indicator function is developed to enable smooth switching between the composite NN controller and the robust control law, which guarantees that only the composite adaptive law is valid when the state variables are within the RBFNN's working region to save control energy.
\end{remark}

Enlightened by \cite{2021Improved,F_FxT}, we propose the following \emph{Lemma}:
\begin{lemma} 
If there exist a continuous Lyapunov function $V\left( \theta  \right)$ and constants ${\theta _1} > 0,{\theta _2} > 0,{\theta _3} > 0,0 < {\theta _4} < \infty ,0 < m < 1,r > 1$ such that: $\dot V(\rho ) \le  - {\theta _1}{V^m}(\rho ) - {\theta _2}{V^r}(\rho ) - {\theta _3}V(\rho ) + {\theta _4}$, then the trajectory of system $\dot \rho  = f(\rho )$ is practical F-FxT stable, and the residual set is formulated as:
\begin{equation}\small
\left\{ {\rho :V\left( \rho  \right) \le \min \left\{ {{{\left( {\frac{{{\theta _4}}}{{\left( {1 - \nu } \right){\theta _1}}}} \right)}^{\frac{1}{m}}},{{\left( {\frac{{{\theta _4}}}{{\left( {1 - \nu } \right){\theta _2}}}} \right)}^{\frac{1}{r}}},\frac{{{\theta _4}}}{{\left( {1 - \nu } \right){\theta _3}}}} \right\}} \right\}
\end{equation}
where $\nu  \in (0,1)$. Additionally, the reaching time $T$ is bounded by $T \le \max \left\{ {{T_1},{T_2},{T_3}} \right\}$, where 
\begin{equation}
\begin{aligned}
&{T_1} = \frac{\phi }{{{\theta _3}(r - 1)}}\ln \left( {1 + \frac{{{\theta _3}}}{{\nu {\theta _1}}}{{\left( {\frac{{\nu {\theta _1}}}{{{\theta _2}}}} \right)}^{\frac{1}{\phi }}}} \right)\\
&{T_2} = \frac{\phi }{{{\theta _3}(r - 1)}}\ln \left( {1 + \frac{{{\theta _3}}}{{{\theta _1}}}{{\left( {\frac{{{\theta _1}}}{{\nu {\theta _2}}}} \right)}^{\frac{1}{\phi }}}} \right)\\
&{T_3} = \frac{\phi }{{\nu {\theta _3}(r - 1)}}\ln \left( {1 + \frac{{\nu {\theta _3}}}{{{\theta _1}}}{{\left( {\frac{{{\theta _1}}}{{{\theta _2}}}} \right)}^{\frac{1}{\phi }}}} \right)
\end{aligned}
\end{equation}
with $\phi  = {{\left( {r - m} \right)} \mathord{\left/
 {\vphantom {{\left( {r - m} \right)} {\left( {1 - m} \right)}}} \right.
 \kern-\nulldelimiterspace} {\left( {1 - m} \right)}}$.
\end{lemma}

If $m+r=2$, then a tighter approximation of the reaching time for practical F-FxT stability is given as below
\begin{lemma} 
If $m+r=2$, the reaching time $T$ of \emph{Lemma 5} is further formulated as follows:

(i) If the residual set is $\left\{ {\rho :V\left( \rho  \right) \le {{\left( {\frac{{{\theta _4}}}{{\left( {1 - \nu } \right){\theta _1}}}} \right)}^{\frac{1}{m}}}} \right\}$, then $T \le \frac{1}{{r - 1}}{T_{\max 1}}$, where
\begin{equation}\small
{T_{\max 1}} = \left\{\begin{aligned}
&\frac{2}{{\sqrt {{\Delta _1}} }}\left( {\frac{\pi }{2} - \arctan \frac{{2{\theta _2}{\chi _1} + {\theta _3}}}{{\sqrt {{\Delta _1}} }}} \right),&{\theta _3} < 2\sqrt {\nu {\theta _1}{\theta _2}} \\
&\frac{2}{{2{\theta _2}{\chi _1} + {\theta _3}}},&{\theta _3} = 2\sqrt {\nu {\theta _1}{\theta _2}} \\
&\frac{1}{{\sqrt { - {\Delta _1}} }}\ln \left( {\frac{{2{\theta _2}{\chi _1} + {\theta _3} + \sqrt { - {\Delta _1}} }}{{2{\theta _2}{\chi _1} + {\theta _3} - \sqrt { - {\Delta _1}} }}} \right),&{\theta _3} > 2\sqrt {\nu {\theta _1}{\theta _2}} 
\end{aligned} \right.
\end{equation}
with ${\chi _1} = {\left( {\frac{{{\theta _4}}}{{\left( {1 - \nu } \right){\theta _1}}}} \right)^{\frac{{r - 1}}{m}}}, {\Delta _1} = 4\nu {\theta _1}{\theta _2} - \theta _3^2$.

(ii) If the residual set is $\left\{ {\rho :V\left( \rho  \right) \le {{\left( {\frac{{{\theta _4}}}{{\left( {1 - \nu } \right){\theta _2}}}} \right)}^{\frac{1}{r}}}} \right\}$, then $T \le \frac{1}{{r - 1}}{T_{\max 2}}$, where
\begin{equation}\small
{T_{\max 2}} = \left\{ \begin{aligned}
&\frac{2}{{\sqrt {{\Delta _2}} }}\left( {\frac{\pi }{2} - \arctan \frac{{2\nu {\theta _2}{\chi _2} + {\theta _3}}}{{\sqrt {{\Delta _2}} }}} \right),&{\theta _3} < 2\sqrt {\nu {\theta _1}{\theta _2}} \\
&\frac{2}{{2\nu {\theta _2}{\chi _2} + {\theta _3}}},&{\theta _3} = 2\sqrt {\nu {\theta _1}{\theta _2}} \\
&\frac{1}{{\sqrt { - {\Delta _2}} }}\ln \left( {\frac{{2\nu {\theta _2}{\chi _2} + {\theta _3} + \sqrt { - {\Delta _2}} }}{{2\nu {\theta _2}{\chi _2} + {\theta _3} - \sqrt { - {\Delta _2}} }}} \right),&{\theta _3} > 2\sqrt {\nu {\theta _1}{\theta _2}} 
\end{aligned} \right.
\end{equation}
with ${\chi _2} = {\left( {\frac{{{\theta _4}}}{{\left( {1 - \nu } \right){\theta _2}}}} \right)^{\frac{{r - 1}}{r}}},{\Delta _2} = 4\nu {\theta _1}{\theta _2} - \theta _3^2$.

(iii) If the residual set is $\left\{ {\rho :V\left( \rho  \right) \le \frac{{{\theta _4}}}{{\left( {1 - \nu } \right){\theta _3}}}} \right\}$, then $T \le \frac{1}{{r - 1}}{T_{\max 3}}$, where
\begin{equation}\small
{T_{\max 3}} = \left\{ \begin{aligned}
&\frac{2}{{\sqrt {{\Delta _3}} }}\left( {\frac{\pi }{2} - \arctan \frac{{2{\theta _2}{\chi _3} + \nu {\theta _3}}}{{\sqrt {{\Delta _3}} }}} \right),&{\theta _3} < \frac{2}{\nu }\sqrt {{\theta _1}{\theta _2}} \\
&\frac{2}{{2{\theta _2}{\chi _3} + \nu {\theta _3}}},&{\theta _3} = \frac{2}{\nu }\sqrt {{\theta _1}{\theta _2}} \\
&\frac{1}{{\sqrt { - {\Delta _3}} }}\ln \left( {\frac{{2{\theta _2}{\chi _3} + \nu {\theta _3} + \sqrt { - {\Delta _3}} }}{{2{\theta _2}{\chi _3} + \nu {\theta _3} - \sqrt { - {\Delta _3}} }}} \right),&{\theta _3} > \frac{2}{\nu }\sqrt {{\theta _1}{\theta _2}} 
\end{aligned} \right.
\end{equation}
with ${\chi _3} = {\left( {\frac{{{\theta _4}}}{{\left( {1 - \nu } \right){\theta _3}}}} \right)^{r - 1}},{\Delta _3} = 4{\theta _1}{\theta _2} - {\left( {\nu {\theta _3}} \right)^2}$.
\end{lemma}
\section{Globally composite neural FnT/FxT controllers without input nonlinearities}
\subsection{Globally neural FnT controller with composite learning laws}
Define tracking errors as below
\begin{equation}
\begin{aligned}
&{\zeta _1} = {\rho _1} - {y_r}\\
&{\zeta _i} = {\rho _i} - {\rho _{i,c}},\quad i = 2,3, \ldots ,n
\end{aligned}
\end{equation}
where ${\rho _{i,c}}$ are obtained by the rapid FnT differentiator \cite{lemma6} with virtual control variables ${\alpha _{i - 1}}$ being inputs.

To alleviate the influence of $\left( {{\rho _{i,c}} - {\alpha _{i - 1}}} \right)$, the following compensation system is adopted [19]
\begin{equation}
\begin{aligned}
&{{\dot \sigma }_1} =  - {k_1}{\sigma _1} + {g_1}\left( {{\rho _{2,c}} - {\alpha _1}} \right) + {g_1}{\sigma _2} - {\gamma _1}\sigma _1^m\\
&{{\dot \sigma }_i} =  - {k_i}{\sigma _i} + {g_i}\left( {{\rho _{\left( {i + 1} \right),c}} - {\alpha _i}} \right) - {g_{i - 1}}{\sigma _{i - 1}} + {g_i}{\sigma _{i + 1}} - {\gamma _i}\sigma _i^m\\
&{{\dot \sigma }_n} =  - {k_n}{\sigma _n} - {g_{n - 1}}{\sigma _{n - 1}} - {\gamma _n}\sigma _n^m
\end{aligned}
\end{equation}
where ${\sigma _i}\left( 0 \right) = 0$, and ${k_i} > 0,{\gamma _i} > 0,\frac{1}{2} < m = \frac{{{m_2}}}{{{m_1}}} < 1$ are designed parameters with $m_1, m_2$ being odd numbers.

In terms of the universal approximation theory \cite{RBF2014}, if ${\bar \rho _i}$ are on the neural active compact set, then ${h_i}\left( {{{\bar \rho }_i}} \right) = l_i^T{\psi _i}\left( {{{\bar \rho }_i}} \right) + {\varepsilon _i}$, where $l_i$ is the ideal weight vector and ${\psi _i}\left( {{{\bar \rho }_i}} \right)$ is the Gaussian basis function vector, $\left| {{\varepsilon _i}} \right| \le {\bar \varepsilon _i},{L_i} = {\left\| {{l_i}} \right\|^2}$. 

Then, virtual control variables ${\alpha _i}$ and the control input $u = {\alpha _n}$ are developed as follows (Method 1)
\begin{equation}\small
\begin{aligned}
{\alpha _1} &= \frac{1}{{{g_1}}}\left[ { - {k_1}{\zeta _1} + {{\dot y}_r} - {p_1}{\varphi _1}\left( {{\lambda _1}} \right) - {{\hat d}_1}\tanh \left( {\frac{{{\lambda _1}}}{{{\eta _{1d}}}}} \right)} \right]\\
 &- \frac{1}{{{g_1}}}\left[ {\frac{{{w_1}\left( {{{\bar \rho }_1}} \right){\lambda _1}}}{{2a_1^2}}{{\hat L}_1}\psi _1^T{\psi _1} + \left( {1 - {w_1}} \right){{\hat \tau }_1}{H_1}\tanh \left( {\frac{{{H_1}{\lambda _1}}}{{{\eta _1}}}} \right)} \right]\\
{\alpha _i} &= \frac{1}{{{g_i}}}\left[ { - {k_i}{\zeta _i} + {{\dot \rho }_{i,c}} - {g_{i - 1}}{\zeta _{i - 1}} - {p_i}{\varphi _i}\left( {{\lambda _i}} \right) - {{\hat d}_i}\tanh \left( {\frac{{{\lambda _i}}}{{{\eta _{id}}}}} \right)} \right]\\
 &- \frac{1}{{{g_i}}}\left[ {\frac{{{w_i}\left( {{{\bar \rho }_i}} \right){\lambda _i}}}{{2a_i^2}}{{\hat L}_i}\psi _i^T{\psi _i} + \left( {1 - {w_i}} \right){{\hat \tau }_i}{H_i}\tanh \left( {\frac{{{H_i}{\lambda _i}}}{{{\eta _i}}}} \right)} \right]
\end{aligned}
\end{equation} 
where ${p_i} > 0,{a_i} > 0,{\eta _i} > 0,{\eta _{id}} > 0,{\lambda _i} = {\zeta _i} - {\sigma _i}$, 
\begin{equation}
{\varphi _i}\left( {{\lambda _i}} \right) = \lambda _i^{1 + 2m}\sqrt {\frac{{\lambda _i^{2 + 2m} + \mu _i^2 + \kappa _i^2}}{{\left( {\lambda _i^{2 + 2m} + \mu _i^2} \right)\left( {\lambda _i^{2 + 2m} + \kappa _i^2} \right)}}} 
\end{equation}
with ${\mu _i} > 0,{\kappa _i} > 0$ being design parameters, and ${\hat L_i},{\hat \tau _i},{\hat d_i}$ are the approximation of ${L_i},{\tau _i},{d_i}$, respectively, which are expressed as follows
\begin{equation}
\begin{aligned}
&{\dot{ \hat L}_i} = {\beta _{hi}}{w_i}\left( {\lambda _i^2 + {\beta _{zi}}z_{iN}^2} \right)\psi _i^T{\psi _i} - {\beta _{i1}}{{\hat L}_i} - {\beta _{i2}}\hat L_i^m\\
&{\dot {\hat \tau }_i} = {\delta _{i1}}\left( {1 - {w_i}} \right){H_i}{\lambda _i}\tanh \left( {\frac{{{H_i}{\lambda _i}}}{{{\eta _i}}}} \right) - {\delta _{i2}}{{\hat \tau }_i} - {\delta _{i3}}\hat \tau _i^m\\
&{\dot {\hat d}_i} = {q_{i1}}{\lambda _i}\tanh \left( {\frac{{{\lambda _i}}}{{{\eta _{id}}}}} \right) - {q_{i2}}{{\hat d}_i} - {q_{i3}}\hat d_i^m
\end{aligned}
\end{equation}
with ${\beta _{hi}},{\beta _{zi}},{\beta _{i1}},{\beta _{i2}},{\delta _{i1}},{\delta _{i2}},{\delta _{i3}},{q_{i1}},{q_{i2}},{q_{i3}}$ being positive design parameters.

The prediction errors ${z_{iN}} = {\rho _i} - {\hat \rho _i}$ are extracted from the following modified finite-time SPEM (denote ${\rho_{n+1}}=u$):
\begin{equation}
\begin{aligned}
{\dot{\hat \rho }_i} =& \frac{{{w_i}\left( {{{\bar \rho }_i}} \right)}}{{2a_i^2}}{{\hat L}_i}{z_{iN}}\psi _i^T{\psi _i} + \left( {1 - {w_i}} \right){{\hat \tau }_{iN}}{H_i}\tanh \left( {\frac{{{H_i}{z_{iN}}}}{{{\eta _{iN}}}}} \right)\\
 &+ {g_i}{\rho _{i+1}} + {r_{i1}}{z_{iN}} + {r_{i2}}z_{iN}^m + {{\hat d}_{iN}}\tanh \left( {\frac{{{z_{iN}}}}{{{\eta _{idN}}}}} \right)
\end{aligned}
\end{equation}
where
\begin{equation}\small
\begin{aligned}
&{\dot {\hat \tau }_{iN}} = {\delta _{i1N}}\left( {1 - {w_i}} \right){H_i}{z_{iN}}\tanh \left( {\frac{{{H_i}{z_{iN}}}}{{{\eta _{iN}}}}} \right) - {\delta _{i2N}}{{\hat \tau }_{iN}} - {\delta _{i3N}}\hat \tau _{iN}^m\\
&{\dot {\hat d}_{iN}} = {q_{i1N}}{{z_{iN}}}\tanh \left( {\frac{{{z_{iN}}}}{{{\eta _{idN}}}}} \right) - {q_{i2N}}{{\hat d}_{iN}} - {q_{i3N}}\hat d_{iN}^m
\end{aligned}
\end{equation}
with ${\eta _{iN}},{\eta _{idN}},{\delta _{i1N}},{\delta _{i2N}},{\delta _{i3N}},{q_{i1N}},{q_{i2N}},{q_{i3N}}$ being positive design parameters.

If we define ${N_i} = \max \left\{ {{{\bar \varepsilon }_i},\left\| {{l_i}} \right\|} \right\},{\psi _{ih}} = \left\| {{\psi _i}} \right\| + 1$ (Method 2) or ${N_i} = \left\| {{l_i}} \right\|,{\psi _{ih}} = \left\| {{\psi _i}} \right\|$ (Method 3), then another virtual control variables ${\alpha _i}$ and the control input $u = {\alpha _n}$ are constructed  as follows
\begin{equation}\small
\begin{aligned}
&{\alpha _1} = \frac{1}{{{g_1}}}\left[ { - {k_1}{\zeta _1} + {{\dot y}_r} - {p_1}{\varphi _1}\left( {{\lambda _1}} \right) - {{\hat d}_1}\tanh \left( {\frac{{{\lambda _1}}}{{{\eta _{1d}}}}} \right)} \right]\\
 &- \frac{1}{{{g_1}}}\left[ {{w_1}{{\hat N}_1}{\psi _{1h}}\tanh \left( {\frac{{{\psi _{1h}}{\lambda _1}}}{{{\eta _{1\theta }}}}} \right) + \left( {1 - {w_1}} \right){{\hat \tau }_1}{H_1}\tanh \left( {\frac{{{H_1}{\lambda _1}}}{{{\eta _1}}}} \right)} \right]\\
&{\alpha _i} = \frac{1}{{{g_i}}}\left[ { - {k_i}{\zeta _i} + {{\dot \rho }_{i,c}} - {g_{i - 1}}{\zeta _{i - 1}} - {p_i}{\varphi _i}\left( {{\lambda _i}} \right) - {{\hat d}_i}\tanh \left( {\frac{{{\lambda _i}}}{{{\eta _{id}}}}} \right)} \right]\\
 &- \frac{1}{{{g_i}}}\left[ {{w_i}{{\hat N}_i}{\psi _{ih}}\tanh \left( {\frac{{{\psi _{ih}}{\lambda _i}}}{{{\eta _{i\theta }}}}} \right) + \left( {1 - {w_i}} \right){{\hat \tau }_i}{H_i}\tanh \left( {\frac{{{H_i}{\lambda _i}}}{{{\eta _i}}}} \right)} \right]
\end{aligned}
\end{equation}
where ${\hat N_i},{\hat \tau _i},{\hat d_i}$ are the approximation of ${N_i},{\tau _i},{d_i}$ respectively, which are formulated as follows
\begin{equation}\small
\begin{aligned}
{\dot {\hat N}_i} = &{\beta _{hi}}{w_i}\left[ {{\lambda _i}{\psi _{ih}}\tanh \left( {\frac{{{\lambda _i}{\psi _{ih}}}}{{{\eta _{i\theta }}}}} \right) + {\beta _{zi}}{z_{iN}}{\psi _{ih}}\tanh \left( {\frac{{{z_{iN}}{\psi _{ih}}}}{{{\eta _{i\theta N}}}}} \right)} \right]\\
 &- {\beta _{i1}}{{\hat N}_i} - {\beta _{i2}}\hat N_i^m\\
{\dot {\hat \tau }_i} = &{\delta _{i1}}\left( {1 - {w_i}} \right){H_i}{\lambda _i}\tanh \left( {\frac{{{H_i}{\lambda _i}}}{{{\eta _i}}}} \right) - {\delta _{i2}}{{\hat \tau }_i} - {\delta _{i3}}\hat \tau _i^m\\
{\dot {\hat d}_i} =& {q_{i1}}{\lambda _i}\tanh \left( {\frac{{{\lambda _i}}}{{{\eta _{id}}}}} \right) - {q_{i2}}{{\hat d}_i} - {q_{i3}}\hat d_i^m
\end{aligned}
\end{equation}
with ${\eta _{1\theta }} > 0,{\eta _{1\theta N}} > 0$ being design parameters.

The prediction errors ${z_{iN}} = {\rho _i} - {\hat \rho _i}$ are extracted from the following modified finite-time SPEM (denote ${\rho_{n+1}}=u$):
\begin{equation}
\begin{aligned}
{\dot{\hat \rho }_i} = &{w_i}{{\hat N}_i}{\psi _{ih}}\tanh \left( {\frac{{{\psi _{ih}}{z_{iN}}}}{{{\eta _{i\theta N}}}}} \right) + \left( {1 - {w_i}} \right){{\hat \tau }_{iN}}{H_i}\tanh \left( {\frac{{{H_i}{z_{iN}}}}{{{\eta _{iN}}}}} \right)\\
& + {g_i}{\rho _{i+1}} + {r_{i1}}{z_{iN}} + {r_{i2}}z_{iN}^m + {{\hat d}_{iN}}\tanh \left( {\frac{{{z_{iN}}}}{{{\eta _{idN}}}}} \right)
\end{aligned}
\end{equation}
where
\begin{equation}\small
\begin{aligned}
&{\dot {\hat \tau }_{iN}} = {\delta _{i1N}}\left( {1 - {w_i}} \right){H_i}{z_{iN}}\tanh \left( {\frac{{{H_i}{z_{iN}}}}{{{\eta _{iN}}}}} \right) - {\delta _{i2N}}{{\hat \tau }_{iN}} - {\delta _{i3N}}\hat \tau _{iN}^m\\
&{\dot {\hat d}_{iN}} = {q_{i1N}}{{z_{iN}}}\tanh \left( {\frac{{{z_{iN}}}}{{{\eta _{idN}}}}} \right) - {q_{i2N}}{{\hat d}_{iN}} - {q_{i3N}}\hat d_{iN}^m
\end{aligned}
\end{equation}
\begin{theorem}
Considering the uncertain strict-feedback system (1), the compensation system (14), the prediction errors and modified finite-time SPEM (18) or (22), the virtual control law and controller (15) or (20), the adaptive laws (17) or (21), all signals in the closed-loop system are globally FnT bounded, and the tracking error tends to an arbitrarily small domain around zero for finite time. 
\end{theorem}
\begin{proof}
For controller (15), the Lyapunov function is designed as follows
\begin{equation}
\begin{aligned}
V = &\frac{1}{2}\sum\limits_{i = 1}^n {\left( {\lambda _i^2 + \frac{{\tilde L_i^2}}{{2a_i^2{\beta _{hi}}}} + {\beta _{zi}}z_{iN}^2 + \sigma _i^2} \right)} \\
& + \frac{1}{2}\sum\limits_{i = 1}^n {\left( {\delta _{i1}^{ - 1}\tilde \tau _i^2 + \delta _{i1N}^{ - 1}\tilde \tau _{iN}^2 + q_{i1}^{ - 1}\tilde d_i^2 + q_{i1N}^{ - 1}\tilde d_{iN}^2} \right)}, 
\end{aligned}
\end{equation}
and for controller (20), the Lyapunov function is developed as below
\begin{equation}
\begin{aligned}
V = &\frac{1}{2}\sum\limits_{i = 1}^n {\left( {\lambda _i^2 + \frac{{\tilde N_i^2}}{{{\beta _{hi}}}} + {\beta _{zi}}z_{iN}^2 + \sigma _i^2} \right)} \\
& + \frac{1}{2}\sum\limits_{i = 1}^n {\left( {\delta _{i1}^{ - 1}\tilde \tau _i^2 + \delta _{i1N}^{ - 1}\tilde \tau _{iN}^2 + q_{i1}^{ - 1}\tilde d_i^2 + q_{i1N}^{ - 1}\tilde d_{iN}^2} \right)}, 
\end{aligned}
\end{equation}
where ${\tilde L_i} = {L_i} - {\hat L_i},{\tilde N_i} = {N_i} - {\hat N_i},{\tilde \tau _i} = {\tau _i} - {\hat \tau _i},{\tilde d_i} = {\bar d_i} - {\hat d_i},{\tilde \tau _{iN}} = {\tau _i} - {\hat \tau _{iN}},{\tilde d_{iN}} = {\bar d_i} - {\hat d_{iN}},\left| {{d_i}} \right| \le {\bar d_i}$.

By utilizing \emph{Lemma 1-4}, the derivative of $V$ w.r.t. time is
\begin{equation}
\dot V(\rho ) \le  - {\theta _1}V(\rho ) - {\theta _2}V{(\rho )^{\frac{{1 + m}}{2}}}{\rm{ + }}{\theta _3}
\end{equation}
where ${\theta _1} > 0,{\theta _2} > 0,{\theta _3} > 0$.

According to the FnT stability theory in [17], the proof is completed.
\end{proof}
\subsection{Globally composite neural FnT controller with single learning parameter}
To further diminish the updating weights, we define $L = \max \left\{ {{{\left\| {{l_i}} \right\|}^2}} \right\},\left( {i = 1,2, \ldots ,n} \right)$. 

Then, the virtual control variables ${\alpha _i}$ and the control input $u = {\alpha _n}$ are designed as follows (Method 1-single)
\begin{equation}\small
\begin{aligned}
{\alpha _1} &= \frac{1}{{{g_1}}}\left[ { - {k_1}{\zeta _1} + {{\dot y}_r} - {p_1}{\varphi _1}\left( {{\lambda _1}} \right) - {{\hat d}_1}\tanh \left( {\frac{{{\lambda _1}}}{{{\eta _{1d}}}}} \right)} \right]\\
 &- \frac{1}{{{g_1}}}\left[ {\frac{{{w_1}{\lambda _1}}}{{2a_1^2}}{{\hat L}}\psi _1^T{\psi _1} + \left( {1 - {w_1}} \right){{\hat \tau }_1}{H_1}\tanh \left( {\frac{{{H_1}{\lambda _1}}}{{{\eta _1}}}} \right)} \right]\\
{\alpha _i} &= \frac{1}{{{g_i}}}\left[ { - {k_i}{\zeta _i} + {{\dot \rho }_{i,c}} - {g_{i - 1}}{\zeta _{i - 1}} - {p_i}{\varphi _i}\left( {{\lambda _i}} \right) - {{\hat d}_i}\tanh \left( {\frac{{{\lambda _i}}}{{{\eta _{id}}}}} \right)} \right]\\
 &- \frac{1}{{{g_i}}}\left[ {\frac{{{w_i}{\lambda _i}}}{{2a_i^2}}{{\hat L}}\psi _i^T{\psi _i} + \left( {1 - {w_i}} \right){{\hat \tau }_i}{H_i}\tanh \left( {\frac{{{H_i}{\lambda _i}}}{{{\eta _i}}}} \right)} \right]
\end{aligned}
\end{equation} 
where ${\hat L},{\hat \tau _i},{\hat d_i}$ are the approximation of ${L},{\tau _i},{d_i}$, respectively, which are formulated as follows
\begin{equation}
\begin{aligned}
&\dot {\hat L} = {\beta _{h}}\sum\limits_{i = 1}^n {\frac{{{w_i}}}{{2a_i^2}}\left( {\lambda _i^2 + {\beta _{zi}}z_{iN}^2} \right)\psi _i^T{\psi _i}}  - {\beta _{1}}\hat L - {\beta _{2}}{\hat L^m}\\
&{\dot {\hat \tau }_i} = {\delta _{i1}}\left( {1 - {w_i}} \right){H_i}{\lambda _i}\tanh \left( {\frac{{{H_i}{\lambda _i}}}{{{\eta _i}}}} \right) - {\delta _{i2}}{{\hat \tau }_i} - {\delta _{i3}}\hat \tau _i^m\\
&{\dot {\hat d}_i} = {q_{i1}}{\lambda _i}\tanh \left( {\frac{{{\lambda _i}}}{{{\eta _{id}}}}} \right) - {q_{i2}}{{\hat d}_i} - {q_{i3}}\hat d_i^m
\end{aligned}
\end{equation}

The prediction errors ${z_{iN}} = {\rho _i} - {\hat \rho _i}$ are extracted from the following improved FnT SPEM (denote ${\rho_{n+1}}=u$):
\begin{equation}
\begin{aligned}
{\dot{\hat \rho }_i} =& \frac{{{w_i}\left( {{{\bar \rho }_i}} \right)}}{{2a_i^2}}{{\hat L}}{z_{iN}}\psi _i^T{\psi _i} + \left( {1 - {w_i}} \right){{\hat \tau }_{iN}}{H_i}\tanh \left( {\frac{{{H_i}{z_{iN}}}}{{{\eta _{iN}}}}} \right)\\
 &+ {g_i}{\rho _{i+1}} + {r_{i1}}{z_{iN}} + {r_{i2}}z_{iN}^m + {{\hat d}_{iN}}\tanh \left( {\frac{{{z_{iN}}}}{{{\eta _{idN}}}}} \right)
\end{aligned}
\end{equation}
where
\begin{equation}\small
\begin{aligned}
&{\dot {\hat \tau }_{iN}} = {\delta _{i1N}}\left( {1 - {w_i}} \right){H_i}{z_{iN}}\tanh \left( {\frac{{{H_i}{z_{iN}}}}{{{\eta _{iN}}}}} \right) - {\delta _{i2N}}{{\hat \tau }_{iN}} - {\delta _{i3N}}\hat \tau _{iN}^m\\
&{\dot {\hat d}_{iN}} = {q_{i1N}}{{z_{iN}}}\tanh \left( {\frac{{{z_{iN}}}}{{{\eta _{idN}}}}} \right) - {q_{i2N}}{{\hat d}_{iN}} - {q_{i3N}}\hat d_{iN}^m
\end{aligned}
\end{equation}

If we define $N = \max \left\{ {\max \left\{ {{{\bar \varepsilon }_i},\left\| {{l_i}} \right\|} \right\}} \right\},\left( {i = 1,2, \ldots ,n} \right)$ (Method 2-single) or $N = \max \left\{ {\left\| {{l_i}} \right\|} \right\},\left( {i = 1,2, \ldots ,n} \right)$ (Method 3-single), then another virtual control variables ${\alpha _i}$ and the control input $u = {\alpha _n}$ are constructed  as follows
\begin{equation}\small
\begin{aligned}
&{\alpha _1} = \frac{1}{{{g_1}}}\left[ { - {k_1}{\zeta _1} + {{\dot y}_r} - {p_1}{\varphi _1}\left( {{\lambda _1}} \right) - {{\hat d}_1}\tanh \left( {\frac{{{\lambda _1}}}{{{\eta _{1d}}}}} \right)} \right]\\
 &- \frac{1}{{{g_1}}}\left[ {{w_1}{{\hat N}}{\psi _{1h}}\tanh \left( {\frac{{{\psi _{1h}}{\lambda _1}}}{{{\eta _{1\theta }}}}} \right) + \left( {1 - {w_1}} \right){{\hat \tau }_1}{H_1}\tanh \left( {\frac{{{H_1}{\lambda _1}}}{{{\eta _1}}}} \right)} \right]\\
&{\alpha _i} = \frac{1}{{{g_i}}}\left[ { - {k_i}{\zeta _i} + {{\dot \rho }_{i,c}} - {g_{i - 1}}{\zeta _{i - 1}} - {p_i}{\varphi _i}\left( {{\lambda _i}} \right) - {{\hat d}_i}\tanh \left( {\frac{{{\lambda _i}}}{{{\eta _{id}}}}} \right)} \right]\\
 &- \frac{1}{{{g_i}}}\left[ {{w_i}{{\hat N}}{\psi _{ih}}\tanh \left( {\frac{{{\psi _{ih}}{\lambda _i}}}{{{\eta _{i\theta }}}}} \right) + \left( {1 - {w_i}} \right){{\hat \tau }_i}{H_i}\tanh \left( {\frac{{{H_i}{\lambda _i}}}{{{\eta _i}}}} \right)} \right]
\end{aligned}
\end{equation}
where ${\hat N},{\hat \tau _i},{\hat d_i}$ are the approximation of ${N},{\tau _i},{d_i}$ respectively, which are expressed as follows
\begin{equation}\small
\begin{aligned}
\dot {\hat N} = &{\beta _{h}}\sum\limits_{i = 1}^n {{w_i}\left[ {{\lambda _i}{\psi _{ih}}\tanh \left( {\frac{{{\lambda _i}{\psi _{ih}}}}{{{\eta _{i\theta }}}}} \right) + {\beta _{zi}}{z_{iN}}{\psi _{ih}}\tanh \left( {\frac{{{z_{iN}}{\psi _{ih}}}}{{{\eta _{i\theta N}}}}} \right)} \right]}  \\
&- {\beta _{1}}\hat N - {\beta _{2}}{\hat N^m}\\
{\dot {\hat \tau }_i} = &{\delta _{i1}}\left( {1 - {w_i}} \right){H_i}{\lambda _i}\tanh \left( {\frac{{{H_i}{\lambda _i}}}{{{\eta _i}}}} \right) - {\delta _{i2}}{{\hat \tau }_i} - {\delta _{i3}}\hat \tau _i^m\\
{\dot {\hat d}_i} =& {q_{i1}}{\lambda _i}\tanh \left( {\frac{{{\lambda _i}}}{{{\eta _{id}}}}} \right) - {q_{i2}}{{\hat d}_i} - {q_{i3}}\hat d_i^m
\end{aligned}
\end{equation}

The prediction errors ${z_{iN}} = {\rho _i} - {\hat \rho _i}$ are extracted from the following improved FnT SPEM (denote ${\rho_{n+1}}=u$):
\begin{equation}
\begin{aligned}
{\dot{\hat \rho }_i} = &{w_i}{{\hat N}}{\psi _{ih}}\tanh \left( {\frac{{{\psi _{ih}}{z_{iN}}}}{{{\eta _{i\theta N}}}}} \right) + \left( {1 - {w_i}} \right){{\hat \tau }_{iN}}{H_i}\tanh \left( {\frac{{{H_i}{z_{iN}}}}{{{\eta _{iN}}}}} \right)\\
& + {g_i}{\rho _{i+1}} + {r_{i1}}{z_{iN}} + {r_{i2}}z_{iN}^m + {{\hat d}_{iN}}\tanh \left( {\frac{{{z_{iN}}}}{{{\eta _{idN}}}}} \right)
\end{aligned}
\end{equation}
where
\begin{equation}\small
\begin{aligned}
&{\dot {\hat \tau }_{iN}} = {\delta _{i1N}}\left( {1 - {w_i}} \right){H_i}{z_{iN}}\tanh \left( {\frac{{{H_i}{z_{iN}}}}{{{\eta _{iN}}}}} \right) - {\delta _{i2N}}{{\hat \tau }_{iN}} - {\delta _{i3N}}\hat \tau _{iN}^m\\
&{\dot {\hat d}_{iN}} = {q_{i1N}}{{z_{iN}}}\tanh \left( {\frac{{{z_{iN}}}}{{{\eta _{idN}}}}} \right) - {q_{i2N}}{{\hat d}_{iN}} - {q_{i3N}}\hat d_{iN}^m
\end{aligned}
\end{equation}
\begin{theorem}
Considering the uncertain strict-feedback system (1), the compensation system (14), the prediction errors and modified finite-time SPEM (29) or (33), the virtual control law and controller (27) or (31), the adaptive laws (28) or (32), all signals in the closed-loop system are globally FnT bounded, and the tracking error tends to an arbitrarily small domain around zero for finite time. 
\end{theorem}
\begin{proof}
For controller (27), the following Lyapunov function is developed
\begin{equation}
\begin{aligned}
V = &\frac{1}{2}\sum\limits_{i = 1}^n {\left( {\lambda _i^2 + {\beta _{zi}}z_{iN}^2 + \sigma _i^2} \right)}+ \frac{{\tilde L^2}}{{2{\beta _{h}}}} \\
& + \frac{1}{2}\sum\limits_{i = 1}^n {\left( {\delta _{i1}^{ - 1}\tilde \tau _i^2 + \delta _{i1N}^{ - 1}\tilde \tau _{iN}^2 + q_{i1}^{ - 1}\tilde d_i^2 + q_{i1N}^{ - 1}\tilde d_{iN}^2} \right)}, 
\end{aligned}
\end{equation}
and for controller (31), the Lyapunov function is designed as below
\begin{equation}
\begin{aligned}
V = &\frac{1}{2}\sum\limits_{i = 1}^n {\left( {\lambda _i^2  + {\beta _{zi}}z_{iN}^2 + \sigma _i^2} \right)}+ \frac{{\tilde N^2}}{{{\beta _{h}}}} \\
& + \frac{1}{2}\sum\limits_{i = 1}^n {\left( {\delta _{i1}^{ - 1}\tilde \tau _i^2 + \delta _{i1N}^{ - 1}\tilde \tau _{iN}^2 + q_{i1}^{ - 1}\tilde d_i^2 + q_{i1N}^{ - 1}\tilde d_{iN}^2} \right)}, 
\end{aligned}
\end{equation}
where ${\tilde L} = {L} - {\hat L},{\tilde N} = {N} - {\hat N},{\tilde \tau _i} = {\tau _i} - {\hat \tau _i},{\tilde d_i} = {\bar d_i} - {\hat d_i},{\tilde \tau _{iN}} = {\tau _i} - {\hat \tau _{iN}},{\tilde d_{iN}} = {\bar d_i} - {\hat d_{iN}},\left| {{d_i}} \right| \le {\bar d_i}$.

By utilizing \emph{Lemma 1-4}, the derivative of $V$ w.r.t. time is
\begin{equation}
\dot V(\rho ) \le  - {\vartheta _1}V(\rho ) - {\vartheta _2}V{(\rho )^{\frac{{1 + m}}{2}}}{\rm{ + }}{\vartheta _3}
\end{equation}
where ${\vartheta _1} > 0,{\vartheta _2} > 0,{\vartheta _3} > 0$.

In terms of the FnT stability theory in [17], the proof is completed.
\end{proof}
\subsection{Globally neural FxT controller with composite learning laws}
Define tracking errors as follows
\begin{equation}
\begin{aligned}
&{\zeta _1} = {\rho _1} - {y_r}\\
&{\zeta _i} = {\rho _i} - {\rho _{i,c}},\quad i = 2,3, \ldots ,n
\end{aligned}
\end{equation}
where ${\rho _{i,c}}$ are obtained by the FxT command filter \cite{2011Uniform} with virtual control variables ${\alpha _{i - 1}}$ being inputs.

To weaken the impact of $\left( {{\rho _{i,c}} - {\alpha _{i - 1}}} \right)$, the following compensation system is introduced [25]
\begin{equation}
\begin{aligned}
&{{\dot \sigma }_1} =  - {k_{11}}\sigma _1^m - {k_{12}}\sigma _1^r + {g_1}\left( {{\rho _{2,c}} - {\alpha _1}} \right) + {g_1}{\sigma _2}\\
&{{\dot \sigma }_i} =  - {k_{i1}}\sigma _i^m - {k_{i2}}\sigma _i^r + {g_i}\left( {{\rho _{\left( {i + 1} \right),c}} - {\alpha _i}} \right) - {g_{i - 1}}{\sigma _{i - 1}} + {g_i}{\sigma _{i + 1}}\\
&{{\dot \sigma }_n} =  - {k_{n1}}\sigma _n^m - {k_{n2}}\sigma _n^r - {g_{n - 1}}{\sigma _{n - 1}}
\end{aligned}
\end{equation}
where ${\sigma _i}\left( 0 \right) = 0$, and ${k_{i1}} > 0,{k_{i2}} > 0,r = {r_2}/{r_1} > 1$ are designed parameters with $r_1, r_2$ being odd numbers.

Then, the virtual control variables ${\alpha _i}$ and the control input $u = {\alpha _n}$ are developed as follows (Method 4)
\begin{equation}\small
\begin{aligned}
{\alpha _1} &= \frac{1}{{{g_1}}}\left[ { - {k_{12}}\lambda _1^{r} + {{\dot y}_r} - {k_{11}}{\varphi _1}\left( {{\lambda _1}} \right) - {{\hat d}_1}\tanh \left( {\frac{{{\lambda _1}}}{{{\eta _{1d}}}}} \right)} \right]\\
 &- \frac{1}{{{g_1}}}\left[ {\frac{{{w_1}\left( {{{\bar \rho }_1}} \right){\lambda _1}}}{{2a_1^2}}{{\hat L}_1}\psi _1^T{\psi _1} + \left( {1 - {w_1}} \right){{\hat \tau }_1}{H_1}\tanh \left( {\frac{{{H_1}{\lambda _1}}}{{{\eta _1}}}} \right)} \right]\\
{\alpha _i} &= \frac{1}{{{g_i}}}\left[ { - {k_{i2}}\lambda _i^{r} + {{\dot \rho }_{i,c}} - {g_{i - 1}}{\zeta _{i - 1}} - {k_{i1}}{\varphi _i}\left( {{\lambda _i}} \right) - {{\hat d}_i}\tanh \left( {\frac{{{\lambda _i}}}{{{\eta _{id}}}}} \right)} \right]\\
 &- \frac{1}{{{g_i}}}\left[ {\frac{{{w_i}\left( {{{\bar \rho }_i}} \right){\lambda _i}}}{{2a_i^2}}{{\hat L}_i}\psi _i^T{\psi _i} + \left( {1 - {w_i}} \right){{\hat \tau }_i}{H_i}\tanh \left( {\frac{{{H_i}{\lambda _i}}}{{{\eta _i}}}} \right)} \right]
\end{aligned}
\end{equation} 
 where ${a_i} > 0,{\eta _i} > 0,{\eta _{id}} > 0,{\lambda _i} = {\zeta _i} - {\sigma _i}$, and ${\hat L_i},{\hat \tau _i},{\hat d_i}$ are the estimation of ${L_i},{\tau _i},{d_i}$, respectively, which are as follows
\begin{equation}
\begin{aligned}
&{\dot{ \hat L}_i} = {\beta _{hi}}{w_i}\left( {\lambda _i^2 + {\beta _{zi}}z_{iN}^2} \right)\psi _i^T{\psi _i} - {\beta _{i1}}{{\hat L}_i}^r - {\beta _{i2}}\hat L_i^m\\
&{\dot {\hat \tau }_i} = {\delta _{i1}}\left( {1 - {w_i}} \right){H_i}{\lambda _i}\tanh \left( {\frac{{{H_i}{\lambda _i}}}{{{\eta _i}}}} \right) - {\delta _{i2}}{{\hat \tau }_i}^r - {\delta _{i3}}\hat \tau _i^m\\
&{\dot {\hat d}_i} = {q_{i1}}{\lambda _i}\tanh \left( {\frac{{{\lambda _i}}}{{{\eta _{id}}}}} \right) - {q_{i2}}{{\hat d}_i}^r - {q_{i3}}\hat d_i^m
\end{aligned}
\end{equation}
with ${\beta _{hi}},{\beta _{zi}},{\beta _{i1}},{\beta _{i2}},{\delta _{i1}},{\delta _{i2}},{\delta _{i3}},{q_{i1}},{q_{i2}},{q_{i3}}$ being positive design parameters.

The prediction errors ${z_{iN}} = {\rho _i} - {\hat \rho _i}$ are extracted from the following modified FxT SPEM (denote ${\rho_{n+1}}=u$):
\begin{equation}
\begin{aligned}
{\dot{\hat \rho }_i} =& \frac{{{w_i}\left( {{{\bar \rho }_i}} \right)}}{{2a_i^2}}{{\hat L}_i}{z_{iN}}\psi _i^T{\psi _i} + \left( {1 - {w_i}} \right){{\hat \tau }_{iN}}{H_i}\tanh \left( {\frac{{{H_i}{z_{iN}}}}{{{\eta _{iN}}}}} \right)\\
 &+ {g_i}{\rho _{i+1}} + {r_{i1}}z_{iN}^r + {r_{i2}}z_{iN}^m + {{\hat d}_{iN}}\tanh \left( {\frac{{{z_{iN}}}}{{{\eta _{idN}}}}} \right)
\end{aligned}
\end{equation}
where
\begin{equation}\small
\begin{aligned}
&{\dot {\hat \tau }_{iN}} = {\delta _{i1N}}\left( {1 - {w_i}} \right){H_i}{z_{iN}}\tanh \left( {\frac{{{H_i}{z_{iN}}}}{{{\eta _{iN}}}}} \right) - {\delta _{i2N}}{{\hat \tau }_{iN}}^r - {\delta _{i3N}}\hat \tau _{iN}^m\\
&{\dot {\hat d}_{iN}} = {q_{i1N}}{z_{iN}}\tanh \left( {\frac{{{z_{iN}}}}{{{\eta _{idN}}}}} \right) - {q_{i2N}}{{\hat d}_{iN}}^r - {q_{i3N}}\hat d_{iN}^m
\end{aligned}
\end{equation}
with ${\eta _{iN}},{\eta _{idN}},{\delta _{i1N}},{\delta _{i2N}},{\delta _{i3N}},{q_{i1N}},{q_{i2N}},{q_{i3N}}$ being positive design parameters.

If we define ${N_i} = \max \left\{ {{{\bar \varepsilon }_i},\left\| {{l_i}} \right\|} \right\},{\psi _{ih}} = \left\| {{\psi _i}} \right\| + 1$ (Method 5) or ${N_i} = \left\| {{l_i}} \right\|,{\psi _{ih}} = \left\| {{\psi _i}} \right\|$ (Method 6), then another virtual control variables ${\alpha _i}$ and the control input $u = {\alpha _n}$ are constructed as follows
\begin{equation}\small
\begin{aligned}
&{\alpha _1} = \frac{1}{{{g_1}}}\left[ { - {k_{12}}\lambda _1^{r} + {{\dot y}_r} - {k_{11}}{\varphi _1}\left( {{\lambda _1}} \right) - {{\hat d}_1}\tanh \left( {\frac{{{\lambda _1}}}{{{\eta _{1d}}}}} \right)} \right]\\
 &- \frac{1}{{{g_1}}}\left[ {{w_1}{{\hat N}_1}{\psi _{1h}}\tanh \left( {\frac{{{\psi _{1h}}{\lambda _1}}}{{{\eta _{1\theta }}}}} \right) + \left( {1 - {w_1}} \right){{\hat \tau }_1}{H_1}\tanh \left( {\frac{{{H_1}{\lambda _1}}}{{{\eta _1}}}} \right)} \right]\\
&{\alpha _i} = \frac{1}{{{g_i}}}\left[ { - {k_{i2}}\lambda _i^{r} + {{\dot \rho }_{i,c}} - {g_{i - 1}}{\zeta _{i - 1}} - {k_{i1}}{\varphi _i}\left( {{\lambda _i}} \right) - {{\hat d}_i}\tanh \left( {\frac{{{\lambda _i}}}{{{\eta _{id}}}}} \right)} \right]\\
 &- \frac{1}{{{g_i}}}\left[ {{w_i}{{\hat N}_i}{\psi _{ih}}\tanh \left( {\frac{{{\psi _{ih}}{\lambda _i}}}{{{\eta _{i\theta }}}}} \right) + \left( {1 - {w_i}} \right){{\hat \tau }_i}{H_i}\tanh \left( {\frac{{{H_i}{\lambda _i}}}{{{\eta _i}}}} \right)} \right]
\end{aligned}
\end{equation}
where ${\hat N_i},{\hat \tau _i},{\hat d_i}$ are the estimation of ${N_i},{\tau _i},{d_i}$ respectively, which are as follows
\begin{equation}\small
\begin{aligned}
{\dot {\hat N}_i} = &{\beta _{hi}}{w_i}\left[ {{\lambda _i}{\psi _{ih}}\tanh \left( {\frac{{{\lambda _i}{\psi _{ih}}}}{{{\eta _{i\theta }}}}} \right) + {\beta _{zi}}{z_{iN}}{\psi _{ih}}\tanh \left( {\frac{{{z_{iN}}{\psi _{ih}}}}{{{\eta _{i\theta N}}}}} \right)} \right]\\
 &- {\beta _{i1}}{{\hat N}_i}^r - {\beta _{i2}}\hat N_i^m\\
{\dot {\hat \tau }_i} = &{\delta _{i1}}\left( {1 - {w_i}} \right){H_i}{\lambda _i}\tanh \left( {\frac{{{H_i}{\lambda _i}}}{{{\eta _i}}}} \right) - {\delta _{i2}}{{\hat \tau }_i}^r - {\delta _{i3}}\hat \tau _i^m\\
{\dot {\hat d}_i} =& {q_{i1}}{\lambda _i}\tanh \left( {\frac{{{\lambda _i}}}{{{\eta _{id}}}}} \right) - {q_{i2}}{{\hat d}_i}^r - {q_{i3}}\hat d_i^m
\end{aligned}
\end{equation}
with ${\eta _{1\theta }} > 0,{\eta _{1\theta N}} > 0$ being design parameters.

The prediction errors ${z_{iN}} = {\rho _i} - {\hat \rho _i}$ are extracted from the following modified FxT SPEM (denote ${\rho_{n+1}}=u$):
\begin{equation}
\begin{aligned}
{\dot{\hat \rho }_i} = &{w_i}{{\hat N}_i}{\psi _{ih}}\tanh \left( {\frac{{{\psi _{ih}}{z_{iN}}}}{{{\eta _{i\theta N}}}}} \right) + \left( {1 - {w_i}} \right){{\hat \tau }_{iN}}{H_i}\tanh \left( {\frac{{{H_i}{z_{iN}}}}{{{\eta _{iN}}}}} \right)\\
& + {g_i}{\rho _{i+1}} + {r_{i1}}z_{iN}^r + {r_{i2}}z_{iN}^m + {{\hat d}_{iN}}\tanh \left( {\frac{{{z_{iN}}}}{{{\eta _{idN}}}}} \right)
\end{aligned}
\end{equation}
where
\begin{equation}\small
\begin{aligned}
&{\dot {\hat \tau }_{iN}} = {\delta _{i1N}}\left( {1 - {w_i}} \right){H_i}{z_{iN}}\tanh \left( {\frac{{{H_i}{z_{iN}}}}{{{\eta _{iN}}}}} \right) - {\delta _{i2N}}{{\hat \tau }_{iN}}^r - {\delta _{i3N}}\hat \tau _{iN}^m\\
&{\dot {\hat d}_{iN}} = {q_{i1N}}{z_{iN}}\tanh \left( {\frac{{{z_{iN}}}}{{{\eta _{idN}}}}} \right) - {q_{i2N}}{{\hat d}_{iN}}^r - {q_{i3N}}\hat d_{iN}^m
\end{aligned}
\end{equation}
\begin{theorem}
Considering the uncertain strict-feedback system (1), the compensation system (39), the prediction errors and modified FxT SPEM (42) or (46), the virtual control law and controller (40) or (44), the adaptive laws (41) or (45), all signals in the closed-loop system are globally FxT bounded, and the tracking error tends to an arbitrarily small domain near zero for fixed time. 
\end{theorem}
\begin{proof}
For controller (40), the Lyapunov function is constructed as below
\begin{equation}
\begin{aligned}
V = &\frac{1}{2}\sum\limits_{i = 1}^n {\left( {\lambda _i^2 + \frac{{\tilde L_i^2}}{{2a_i^2{\beta _{hi}}}} + {\beta _{zi}}z_{iN}^2 + \sigma _i^2} \right)} \\
& + \frac{1}{2}\sum\limits_{i = 1}^n {\left( {\delta _{i1}^{ - 1}\tilde \tau _i^2 + \delta _{i1N}^{ - 1}\tilde \tau _{iN}^2 + q_{i1}^{ - 1}\tilde d_i^2 + q_{i1N}^{ - 1}\tilde d_{iN}^2} \right)}, 
\end{aligned}
\end{equation}
and for controller (44), the following Lyapunov function is designed:
\begin{equation}
\begin{aligned}
V = &\frac{1}{2}\sum\limits_{i = 1}^n {\left( {\lambda _i^2 + \frac{{\tilde N_i^2}}{{{\beta _{hi}}}} + {\beta _{zi}}z_{iN}^2 + \sigma _i^2} \right)} \\
& + \frac{1}{2}\sum\limits_{i = 1}^n {\left( {\delta _{i1}^{ - 1}\tilde \tau _i^2 + \delta _{i1N}^{ - 1}\tilde \tau _{iN}^2 + q_{i1}^{ - 1}\tilde d_i^2 + q_{i1N}^{ - 1}\tilde d_{iN}^2} \right)}, 
\end{aligned}
\end{equation}
where ${\tilde L_i} = {L_i} - {\hat L_i},{\tilde N_i} = {N_i} - {\hat N_i},{\tilde \tau _i} = {\tau _i} - {\hat \tau _i},{\tilde d_i} = {\bar d_i} - {\hat d_i},{\tilde \tau _{iN}} = {\tau _i} - {\hat \tau _{iN}},{\tilde d_{iN}} = {\bar d_i} - {\hat d_{iN}},\left| {{d_i}} \right| \le {\bar d_i}$.

By utilizing \emph{Lemma 1-4}, the derivative of $V$ w.r.t. time is
\begin{equation}
\dot V(\rho ) \le  - {\bar \theta _1}V(\rho )^{\frac{{1 + r}}{2}} - {\bar \theta _2}V{(\rho )^{\frac{{1 + m}}{2}}}{\rm{ + }}{\bar \theta _3}
\end{equation}
where ${\bar \theta _1} > 0,{\bar \theta _2} > 0,{\bar \theta _3} > 0$.

According to the FxT stability theory in \cite{fxt1,fxt_own}, the proof is completed.
\end{proof}
\subsection{Globally composite neural FxT controller with single learning parameter}
To further decline the updating weights, we define $L = \max \left\{ {{{\left\| {{l_i}} \right\|}^2}} \right\},\left( {i = 1,2, \ldots ,n} \right)$. 

Then, the virtual control variables ${\alpha _i}$ and the control input $u = {\alpha _n}$ are developed as follows (Method 4-single)
\begin{equation}\small
\begin{aligned}
{\alpha _1} &= \frac{1}{{{g_1}}}\left[ { - {k_{12}}\lambda _1^{r} + {{\dot y}_r} - {k_{11}}{\varphi _1}\left( {{\lambda _1}} \right) - {{\hat d}_1}\tanh \left( {\frac{{{\lambda _1}}}{{{\eta _{1d}}}}} \right)} \right]\\
 &- \frac{1}{{{g_1}}}\left[ {\frac{{{w_1}{\lambda _1}}}{{2a_1^2}}{{\hat L}}\psi _1^T{\psi _1} + \left( {1 - {w_1}} \right){{\hat \tau }_1}{H_1}\tanh \left( {\frac{{{H_1}{\lambda _1}}}{{{\eta _1}}}} \right)} \right]\\
{\alpha _i} &= \frac{1}{{{g_i}}}\left[ { - {k_{i2}}\lambda _i^{r} + {{\dot \rho }_{i,c}} - {g_{i - 1}}{\zeta _{i - 1}} - {k_{i1}}{\varphi _i}\left( {{\lambda _i}} \right) - {{\hat d}_i}\tanh \left( {\frac{{{\lambda _i}}}{{{\eta _{id}}}}} \right)} \right]\\
 &- \frac{1}{{{g_i}}}\left[ {\frac{{{w_i}{\lambda _i}}}{{2a_i^2}}{{\hat L}}\psi _i^T{\psi _i} + \left( {1 - {w_i}} \right){{\hat \tau }_i}{H_i}\tanh \left( {\frac{{{H_i}{\lambda _i}}}{{{\eta _i}}}} \right)} \right]
\end{aligned}
\end{equation} 
where ${\hat L},{\hat \tau _i},{\hat d_i}$ are the estimation of ${L},{\tau _i},{d_i}$, respectively, which are as follows
\begin{equation}
\begin{aligned}
&\dot {\hat L} = {\beta _{h}}\sum\limits_{i = 1}^n {\frac{{{w_i}}}{{2a_i^2}}\left( {\lambda _i^2 + {\beta _{zi}}z_{iN}^2} \right)\psi _i^T{\psi _i}}  - {\beta _{1}}{\hat L^r} - {\beta _{2}}{\hat L^m}\\
&{\dot {\hat \tau }_i} = {\delta _{i1}}\left( {1 - {w_i}} \right){H_i}{\lambda _i}\tanh \left( {\frac{{{H_i}{\lambda _i}}}{{{\eta _i}}}} \right) - {\delta _{i2}}{{\hat \tau }_i}^r - {\delta _{i3}}\hat \tau _i^m\\
&{\dot {\hat d}_i} = {q_{i1}}{\lambda _i}\tanh \left( {\frac{{{\lambda _i}}}{{{\eta _{id}}}}} \right) - {q_{i2}}{{\hat d}_i}^r - {q_{i3}}\hat d_i^m
\end{aligned}
\end{equation}

The prediction errors ${z_{iN}} = {\rho _i} - {\hat \rho _i}$ are extracted from the following improved fixed-time SPEM (denote ${\rho_{n+1}}=u$):
\begin{equation}
\begin{aligned}
{\dot{\hat \rho }_i} =& \frac{{{w_i}\left( {{{\bar \rho }_i}} \right)}}{{2a_i^2}}{{\hat L}}{z_{iN}}\psi _i^T{\psi _i} + \left( {1 - {w_i}} \right){{\hat \tau }_{iN}}{H_i}\tanh \left( {\frac{{{H_i}{z_{iN}}}}{{{\eta _{iN}}}}} \right)\\
 &+ {g_i}{\rho _{i+1}} + {r_{i1}}z_{iN}^r + {r_{i2}}z_{iN}^m + {{\hat d}_{iN}}\tanh \left( {\frac{{{z_{iN}}}}{{{\eta _{idN}}}}} \right)
\end{aligned}
\end{equation}
where
\begin{equation}\small
\begin{aligned}
&{\dot {\hat \tau }_{iN}} = {\delta _{i1N}}\left( {1 - {w_i}} \right){H_i}{z_{iN}}\tanh \left( {\frac{{{H_i}{z_{iN}}}}{{{\eta _{iN}}}}} \right) - {\delta _{i2N}}{{\hat \tau }_{iN}}^r - {\delta _{i3N}}\hat \tau _{iN}^m\\
&{\dot {\hat d}_{iN}} = {q_{i1N}}{z_{iN}}\tanh \left( {\frac{{{z_{iN}}}}{{{\eta _{idN}}}}} \right) - {q_{i2N}}{{\hat d}_{iN}}^r - {q_{i3N}}\hat d_{iN}^m
\end{aligned}
\end{equation}

If we define $N = \max \left\{ {\max \left\{ {{{\bar \varepsilon }_i},\left\| {{l_i}} \right\|} \right\}} \right\},\left( {i = 1,2, \ldots ,n} \right)$ (Method 5-single) or $N = \max \left\{ {\left\| {{l_i}} \right\|} \right\},\left( {i = 1,2, \ldots ,n} \right)$ (Method 6-single), then another virtual control variables ${\alpha _i}$ and the control input $u = {\alpha _n}$ are constructed as follows
\begin{equation}\small
\begin{aligned}
&{\alpha _1} = \frac{1}{{{g_1}}}\left[ { - {k_{12}}\lambda _1^{r} + {{\dot y}_r} - {k_{11}}{\varphi _1}\left( {{\lambda _1}} \right) - {{\hat d}_1}\tanh \left( {\frac{{{\lambda _1}}}{{{\eta _{1d}}}}} \right)} \right]\\
 &- \frac{1}{{{g_1}}}\left[ {{w_1}{{\hat N}}{\psi _{1h}}\tanh \left( {\frac{{{\psi _{1h}}{\lambda _1}}}{{{\eta _{1\theta }}}}} \right) + \left( {1 - {w_1}} \right){{\hat \tau }_1}{H_1}\tanh \left( {\frac{{{H_1}{\lambda _1}}}{{{\eta _1}}}} \right)} \right]\\
&{\alpha _i} = \frac{1}{{{g_i}}}\left[ { - {k_{i2}}\lambda _i^{r} + {{\dot \rho }_{i,c}} - {g_{i - 1}}{\zeta _{i - 1}} - {k_{i1}}{\varphi _i}\left( {{\lambda _i}} \right) - {{\hat d}_i}\tanh \left( {\frac{{{\lambda _i}}}{{{\eta _{id}}}}} \right)} \right]\\
 &- \frac{1}{{{g_i}}}\left[ {{w_i}{{\hat N}}{\psi _{ih}}\tanh \left( {\frac{{{\psi _{ih}}{\lambda _i}}}{{{\eta _{i\theta }}}}} \right) + \left( {1 - {w_i}} \right){{\hat \tau }_i}{H_i}\tanh \left( {\frac{{{H_i}{\lambda _i}}}{{{\eta _i}}}} \right)} \right]
\end{aligned}
\end{equation}
where ${\hat N},{\hat \tau _i},{\hat d_i}$ are the estimation of ${N},{\tau _i},{d_i}$ respectively, which are as follows
\begin{equation}\small
\begin{aligned}
\dot {\hat N} = &{\beta _{h}}\sum\limits_{i = 1}^n {{w_i}\left[ {{\lambda _i}{\psi _{ih}}\tanh \left( {\frac{{{\lambda _i}{\psi _{ih}}}}{{{\eta _{i\theta }}}}} \right) + {\beta _{zi}}{z_{iN}}{\psi _{ih}}\tanh \left( {\frac{{{z_{iN}}{\psi _{ih}}}}{{{\eta _{i\theta N}}}}} \right)} \right]}  \\
&- {\beta _{1}}{\hat N^r} - {\beta _{2}}{\hat N^m}\\
{\dot {\hat \tau }_i} = &{\delta _{i1}}\left( {1 - {w_i}} \right){H_i}{\lambda _i}\tanh \left( {\frac{{{H_i}{\lambda _i}}}{{{\eta _i}}}} \right) - {\delta _{i2}}{{\hat \tau }_i}^r - {\delta _{i3}}\hat \tau _i^m\\
{\dot {\hat d}_i} =& {q_{i1}}{\lambda _i}\tanh \left( {\frac{{{\lambda _i}}}{{{\eta _{id}}}}} \right) - {q_{i2}}{{\hat d}_i}^r - {q_{i3}}\hat d_i^m
\end{aligned}
\end{equation}

The prediction errors ${z_{iN}} = {\rho _i} - {\hat \rho _i}$ are extracted from the following improved fixed-time SPEM (denote ${\rho_{n+1}}=u$):
\begin{equation}
\begin{aligned}
{\dot{\hat \rho }_i} = &{w_i}{{\hat N}}{\psi _{ih}}\tanh \left( {\frac{{{\psi _{ih}}{z_{iN}}}}{{{\eta _{i\theta N}}}}} \right) + \left( {1 - {w_i}} \right){{\hat \tau }_{iN}}{H_i}\tanh \left( {\frac{{{H_i}{z_{iN}}}}{{{\eta _{iN}}}}} \right)\\
& + {g_i}{\rho _{i+1}} + {r_{i1}}z_{iN}^r + {r_{i2}}z_{iN}^m + {{\hat d}_{iN}}\tanh \left( {\frac{{{z_{iN}}}}{{{\eta _{idN}}}}} \right)
\end{aligned}
\end{equation}
where
\begin{equation}\small
\begin{aligned}
&{\dot {\hat \tau }_{iN}} = {\delta _{i1N}}\left( {1 - {w_i}} \right){H_i}{z_{iN}}\tanh \left( {\frac{{{H_i}{z_{iN}}}}{{{\eta _{iN}}}}} \right) - {\delta _{i2N}}{{\hat \tau }_{iN}}^r - {\delta _{i3N}}\hat \tau _{iN}^m\\
&{\dot {\hat d}_{iN}} = {q_{i1N}}{z_{iN}}\tanh \left( {\frac{{{z_{iN}}}}{{{\eta _{idN}}}}} \right) - {q_{i2N}}{{\hat d}_{iN}}^r - {q_{i3N}}\hat d_{iN}^m
\end{aligned}
\end{equation}
\begin{theorem}
Considering the uncertain strict-feedback system (1), the compensation system (39), the prediction errors and modified FxT SPEM (53) or (57), the virtual control law and controller (51) or (55), the adaptive laws (52) or (56), all signals in the closed-loop system are globally FxT bounded, and the tracking error tends to an arbitrarily small domain near zero for fixed time. 
\end{theorem}
\begin{proof}
For controller (51), the Lyapunov function is constructed as below
\begin{equation}
\begin{aligned}
V = &\frac{1}{2}\sum\limits_{i = 1}^n {\left( {\lambda _i^2  + {\beta _{zi}}z_{iN}^2 + \sigma _i^2} \right)}+ \frac{{\tilde L^2}}{{2{\beta _{h}}}} \\
& + \frac{1}{2}\sum\limits_{i = 1}^n {\left( {\delta _{i1}^{ - 1}\tilde \tau _i^2 + \delta _{i1N}^{ - 1}\tilde \tau _{iN}^2 + q_{i1}^{ - 1}\tilde d_i^2 + q_{i1N}^{ - 1}\tilde d_{iN}^2} \right)}, 
\end{aligned}
\end{equation}
and for controller (55), the following Lyapunov function is designed:
\begin{equation}
\begin{aligned}
V = &\frac{1}{2}\sum\limits_{i = 1}^n {\left( {\lambda _i^2 + {\beta _{zi}}z_{iN}^2 + \sigma _i^2} \right)} + \frac{{\tilde N^2}}{{{\beta _{h}}}} \\
& + \frac{1}{2}\sum\limits_{i = 1}^n {\left( {\delta _{i1}^{ - 1}\tilde \tau _i^2 + \delta _{i1N}^{ - 1}\tilde \tau _{iN}^2 + q_{i1}^{ - 1}\tilde d_i^2 + q_{i1N}^{ - 1}\tilde d_{iN}^2} \right)}, 
\end{aligned}
\end{equation}
where ${\tilde L} = {L} - {\hat L},{\tilde N} = {N} - {\hat N},{\tilde \tau _i} = {\tau _i} - {\hat \tau _i},{\tilde d_i} = {\bar d_i} - {\hat d_i},{\tilde \tau _{iN}} = {\tau _i} - {\hat \tau _{iN}},{\tilde d_{iN}} = {\bar d_i} - {\hat d_{iN}},\left| {{d_i}} \right| \le {\bar d_i}$.

By utilizing \emph{Lemma 1-4}, the derivative of $V$ w.r.t. time is
\begin{equation}
\dot V(\rho ) \le  - {\bar \vartheta _1}V(\rho )^{\frac{{1 + r}}{2}} - {\bar \vartheta _2}V{(\rho )^{\frac{{1 + m}}{2}}}{\rm{ + }}{\bar \vartheta _3}
\end{equation}
where ${\bar \vartheta _1} > 0,{\bar \vartheta _2} > 0,{\bar \vartheta _3} > 0$.

According to the FxT stability theory in \cite{fxt1,fxt_own}, the proof is completed.
\end{proof}
\subsection{Globally neural F-FxT controller with composite learning laws}
Define tracking errors below
\begin{equation}
\begin{aligned}
&{\zeta _1} = {\rho _1} - {y_r}\\
&{\zeta _i} = {\rho _i} - {\rho _{i,c}},\quad i = 2,3, \ldots ,n
\end{aligned}
\end{equation}
where ${\rho _{i,c}}$ are obtained by the FxT command filter \cite{2011Uniform} with virtual control variables ${\alpha _{i - 1}}$ being inputs.

To diminish the impact of $\left( {{\rho _{i,c}} - {\alpha _{i - 1}}} \right)$, the following compensation system is designed
\begin{equation}
\begin{aligned}
{{\dot \sigma }_1} = & - {k_{11}}\sigma _1^m - {k_{12}}\sigma _1^r + {g_1}\left( {{\rho _{2,c}} - {\alpha _1}} \right) + {g_1}{\sigma _2}- {k_{1}}\sigma _1\\
{{\dot \sigma }_i} = & - {k_{i1}}\sigma _i^m - {k_{i2}}\sigma _i^r + {g_i}\left( {{\rho _{\left( {i + 1} \right),c}} - {\alpha _i}} \right) - {g_{i - 1}}{\sigma _{i - 1}} \\
&+ {g_i}{\sigma _{i + 1}} - {k_{i}}\sigma _i\\
{{\dot \sigma }_n} = & - {k_{n1}}\sigma _n^m - {k_{n2}}\sigma _n^r - {g_{n - 1}}{\sigma _{n - 1}}- {k_{n}}\sigma _n
\end{aligned}
\end{equation}

Then, the virtual control variables ${\alpha _i}$ and the control input $u = {\alpha _n}$ are developed as follows (Method 7)
\begin{equation}\small
\begin{aligned}
{\alpha _1} &= \frac{1}{{{g_1}}}\left[ { - {k_{1}}{\zeta _1}- {k_{12}}\lambda _1^{r} + {{\dot y}_r} - {k_{11}}{\varphi _1}\left( {{\lambda _1}} \right) - {{\hat d}_1}\tanh \left( {\frac{{{\lambda _1}}}{{{\eta _{1d}}}}} \right)} \right]\\
 &- \frac{1}{{{g_1}}}\left[ {\frac{{{w_1}\left( {{{\bar \rho }_1}} \right){\lambda _1}}}{{2a_1^2}}{{\hat L}_1}\psi _1^T{\psi _1} + \left( {1 - {w_1}} \right){{\hat \tau }_1}{H_1}\tanh \left( {\frac{{{H_1}{\lambda _1}}}{{{\eta _1}}}} \right)} \right]\\
{\alpha _i} &= \frac{1}{{{g_i}}}\left[ { - {k_{i2}}\lambda _i^{r} + {{\dot \rho }_{i,c}} - {g_{i - 1}}{\zeta _{i - 1}} - {k_{i1}}{\varphi _i}\left( {{\lambda _i}} \right) - {{\hat d}_i}\tanh \left( {\frac{{{\lambda _i}}}{{{\eta _{id}}}}} \right)} \right]\\
 &- \frac{1}{{{g_i}}}\left[ {k_{i}}{\zeta _i}+{\frac{{{w_i}\left( {{{\bar \rho }_i}} \right){\lambda _i}}}{{2a_i^2}}{{\hat L}_i}\psi _i^T{\psi _i} + \left( {1 - {w_i}} \right){{\hat \tau }_i}{H_i}\tanh \left( {\frac{{{H_i}{\lambda _i}}}{{{\eta _i}}}} \right)} \right]
\end{aligned}
\end{equation} 
 where ${\hat L_i},{\hat \tau _i},{\hat d_i}$ are the estimation of ${L_i},{\tau _i},{d_i}$, respectively, which are as follows
\begin{equation}
\begin{aligned}
&{\dot{ \hat L}_i} = {\beta _{hi}}{w_i}\left( {\lambda _i^2 + {\beta _{zi}}z_{iN}^2} \right)\psi _i^T{\psi _i} - {\beta _{i1}}{{\hat L}_i}^r - {\beta _{i2}}\hat L_i^m - {\beta _{i3}}\hat L_i\\
&{\dot {\hat \tau }_i} = {\delta _{i1}}\left( {1 - {w_i}} \right){H_i}{\lambda _i}\tanh \left( {\frac{{{H_i}{\lambda _i}}}{{{\eta _i}}}} \right) - {\delta _{i2}}{{\hat \tau }_i}^r - {\delta _{i3}}\hat \tau _i^m- {\delta _{i4}}\hat \tau _i\\
&{\dot {\hat d}_i} = {q_{i1}}{\lambda _i}\tanh \left( {\frac{{{\lambda _i}}}{{{\eta _{id}}}}} \right) - {q_{i2}}{{\hat d}_i}^r - {q_{i3}}\hat d_i^m- {q_{i4}}\hat d_i
\end{aligned}
\end{equation}
with ${\beta _{hi}},{\beta _{zi}},{\beta _{i1}},{\beta _{i2}},{\beta _{i3}},{\delta _{i1}},{\delta _{i2}},{\delta _{i3}},{\delta _{i4}},{q_{i1}},{q_{i2}},{q_{i3}},{q_{i4}}$ being positive design parameters.

The prediction errors ${z_{iN}} = {\rho _i} - {\hat \rho _i}$ are extracted from the following modified F-FxT SPEM (denote ${\rho_{n+1}}=u$):
\begin{equation}
\begin{aligned}
{\dot{\hat \rho }_i} &= \frac{{{w_i}\left( {{{\bar \rho }_i}} \right)}}{{2a_i^2}}{{\hat L}_i}{z_{iN}}\psi _i^T{\psi _i} + \left( {1 - {w_i}} \right){{\hat \tau }_{iN}}{H_i}\tanh \left( {\frac{{{H_i}{z_{iN}}}}{{{\eta _{iN}}}}} \right)\\
 &+ {g_i}{\rho _{i+1}} + {r_{i1}}z_{iN}^r + {r_{i2}}z_{iN}^m +{r_{i3}}z_{iN}+ {{\hat d}_{iN}}\tanh \left( {\frac{{{z_{iN}}}}{{{\eta _{idN}}}}} \right)
\end{aligned}
\end{equation}
where
\begin{equation}
\begin{aligned}
{\dot {\hat \tau }_{iN}} =& {\delta _{i1N}}\left( {1 - {w_i}} \right){H_i}{z_{iN}}\tanh \left( {\frac{{{H_i}{z_{iN}}}}{{{\eta _{iN}}}}} \right) - {\delta _{i2N}}{{\hat \tau }_{iN}}^r \\
&- {\delta _{i3N}}\hat \tau _{iN}^m- {\delta _{i4N}}\hat \tau _{iN},\\
{\dot {\hat d}_{iN}} = &{q_{i1N}}{z_{iN}}\tanh \left( {\frac{{{z_{iN}}}}{{{\eta _{idN}}}}} \right) - {q_{i2N}}{{\hat d}_{iN}}^r - {q_{i3N}}\hat d_{iN}^m- {q_{i4N}}\hat d_{iN}
\end{aligned}
\end{equation}
with ${r_{i3}},{\delta _{i4N}},{q_{i4N}}$ being positive design parameters.

If we define ${N_i} = \max \left\{ {{{\bar \varepsilon }_i},\left\| {{l_i}} \right\|} \right\},{\psi _{ih}} = \left\| {{\psi _i}} \right\| + 1$ (Method 8) or ${N_i} = \left\| {{l_i}} \right\|,{\psi _{ih}} = \left\| {{\psi _i}} \right\|$ (Method 9), then another virtual control variables ${\alpha _i}$ and the control input $u = {\alpha _n}$ are constructed as follows
\begin{equation}\small
\begin{aligned}
&{\alpha _1} = \frac{1}{{{g_1}}}\left[ { - {k_{1}}{\zeta _1}- {k_{12}}\lambda _1^{r} + {{\dot y}_r} - {k_{11}}{\varphi _1}\left( {{\lambda _1}} \right) - {{\hat d}_1}\tanh \left( {\frac{{{\lambda _1}}}{{{\eta _{1d}}}}} \right)} \right]\\
 &- \frac{1}{{{g_1}}}\left[ {{w_1}{{\hat N}_1}{\psi _{1h}}\tanh \left( {\frac{{{\psi _{1h}}{\lambda _1}}}{{{\eta _{1\theta }}}}} \right) + \left( {1 - {w_1}} \right){{\hat \tau }_1}{H_1}\tanh \left( {\frac{{{H_1}{\lambda _1}}}{{{\eta _1}}}} \right)} \right]\\
&{\alpha _i} = \frac{1}{{{g_i}}}\left[ { - {k_{i2}}\lambda _i^{r} + {{\dot \rho }_{i,c}} - {g_{i - 1}}{\zeta _{i - 1}} - {k_{i1}}{\varphi _i}\left( {{\lambda _i}} \right) - {{\hat d}_i}\tanh \left( {\frac{{{\lambda _i}}}{{{\eta _{id}}}}} \right)} \right]\\
 &- \frac{1}{{{g_i}}}\left[ {k_{i}}{\zeta _i}+{{w_i}{{\hat N}_i}{\psi _{ih}}\tanh \left( {\frac{{{\psi _{ih}}{\lambda _i}}}{{{\eta _{i\theta }}}}} \right) + \left( {1 - {w_i}} \right){{\hat \tau }_i}{H_i}\tanh \left( {\frac{{{H_i}{\lambda _i}}}{{{\eta _i}}}} \right)} \right]
\end{aligned}
\end{equation}
where ${\hat N_i},{\hat \tau _i},{\hat d_i}$ are the estimation of ${N_i},{\tau _i},{d_i}$ respectively, which are as follows
\begin{equation}\small
\begin{aligned}
{\dot {\hat N}_i} = &{\beta _{hi}}{w_i}\left[ {{\lambda _i}{\psi _{ih}}\tanh \left( {\frac{{{\lambda _i}{\psi _{ih}}}}{{{\eta _{i\theta }}}}} \right) + {\beta _{zi}}{z_{iN}}{\psi _{ih}}\tanh \left( {\frac{{{z_{iN}}{\psi _{ih}}}}{{{\eta _{i\theta N}}}}} \right)} \right]\\
 &- {\beta _{i1}}{{\hat N}_i}^r - {\beta _{i2}}\hat N_i^m- {\beta _{i3}}\hat N_i\\
{\dot {\hat \tau }_i} = &{\delta _{i1}}\left( {1 - {w_i}} \right){H_i}{\lambda _i}\tanh \left( {\frac{{{H_i}{\lambda _i}}}{{{\eta _i}}}} \right) - {\delta _{i2}}{{\hat \tau }_i}^r - {\delta _{i3}}\hat \tau _i^m - {\delta _{i4}}\hat \tau _i\\
{\dot {\hat d}_i} =& {q_{i1}}{\lambda _i}\tanh \left( {\frac{{{\lambda _i}}}{{{\eta _{id}}}}} \right) - {q_{i2}}{{\hat d}_i}^r - {q_{i3}}\hat d_i^m- {q_{i4}}\hat d_i
\end{aligned}
\end{equation}

The prediction errors ${z_{iN}} = {\rho _i} - {\hat \rho _i}$ are extracted from the following modified F-FxT SPEM (denote ${\rho_{n+1}}=u$):
\begin{equation}\small
\begin{aligned}
{\dot{\hat \rho }_i} = &{w_i}{{\hat N}_i}{\psi _{ih}}\tanh \left( {\frac{{{\psi _{ih}}{z_{iN}}}}{{{\eta _{i\theta N}}}}} \right) + \left( {1 - {w_i}} \right){{\hat \tau }_{iN}}{H_i}\tanh \left( {\frac{{{H_i}{z_{iN}}}}{{{\eta _{iN}}}}} \right)\\
& + {g_i}{\rho _{i+1}} + {r_{i1}}z_{iN}^r + {r_{i2}}z_{iN}^m + {r_{i3}}z_{iN}+{{\hat d}_{iN}}\tanh \left( {\frac{{{z_{iN}}}}{{{\eta _{idN}}}}} \right)
\end{aligned}
\end{equation}
where
\begin{equation}\small
\begin{aligned}
{\dot {\hat \tau }_{iN}} = &{\delta _{i1N}}\left( {1 - {w_i}} \right){H_i}{z_{iN}}\tanh \left( {\frac{{{H_i}{z_{iN}}}}{{{\eta _{iN}}}}} \right) - {\delta _{i2N}}{{\hat \tau }_{iN}}^r \\
&- {\delta _{i3N}}\hat \tau _{iN}^m- {\delta _{i4N}}\hat \tau _{iN}\\
{\dot {\hat d}_{iN}} = &{q_{i1N}}{z_{iN}}\tanh \left( {\frac{{{z_{iN}}}}{{{\eta _{idN}}}}} \right) - {q_{i2N}}{{\hat d}_{iN}}^r - {q_{i3N}}\hat d_{iN}^m - {q_{i4N}}\hat d_{iN}
\end{aligned}
\end{equation}
\begin{theorem}
Considering the uncertain strict-feedback system (1), the compensation system (63), the prediction errors and modified F-FxT SPEM (66) or (70), the virtual control law and controller (64) or (68), the adaptive laws (65) or (69), all signals in the closed-loop system are globally F-FxT bounded, and the tracking error tends to an arbitrarily small domain near zero for fast fixed time. 
\end{theorem}
\subsection{Globally composite neural F-FxT controller with single learning parameter}
To further diminish the updating weights, we define $L = \max \left\{ {{{\left\| {{l_i}} \right\|}^2}} \right\},\left( {i = 1,2, \ldots ,n} \right)$. 

Then, the virtual control variables ${\alpha _i}$ and the control input $u = {\alpha _n}$ are developed as follows (Method 7-single)
\begin{equation}\small
\begin{aligned}
{\alpha _1} &= \frac{1}{{{g_1}}}\left[ { - {k_{1}}{\zeta _1}- {k_{12}}\lambda _1^{r} + {{\dot y}_r} - {k_{11}}{\varphi _1}\left( {{\lambda _1}} \right) - {{\hat d}_1}\tanh \left( {\frac{{{\lambda _1}}}{{{\eta _{1d}}}}} \right)} \right]\\
 &- \frac{1}{{{g_1}}}\left[ {\frac{{{w_1}{\lambda _1}}}{{2a_1^2}}{{\hat L}}\psi _1^T{\psi _1} + \left( {1 - {w_1}} \right){{\hat \tau }_1}{H_1}\tanh \left( {\frac{{{H_1}{\lambda _1}}}{{{\eta _1}}}} \right)} \right]\\
{\alpha _i} &= \frac{1}{{{g_i}}}\left[ { - {k_{i2}}\lambda _i^{r} + {{\dot \rho }_{i,c}} - {g_{i - 1}}{\zeta _{i - 1}} - {k_{i1}}{\varphi _i}\left( {{\lambda _i}} \right) - {{\hat d}_i}\tanh \left( {\frac{{{\lambda _i}}}{{{\eta _{id}}}}} \right)} \right]\\
 &- \frac{1}{{{g_i}}}\left[ {k_{i}}{\zeta _i}+{\frac{{{w_i}{\lambda _i}}}{{2a_i^2}}{{\hat L}}\psi _i^T{\psi _i} + \left( {1 - {w_i}} \right){{\hat \tau }_i}{H_i}\tanh \left( {\frac{{{H_i}{\lambda _i}}}{{{\eta _i}}}} \right)} \right]
\end{aligned}
\end{equation} 
where ${\hat L},{\hat \tau _i},{\hat d_i}$ are the estimation of ${L},{\tau _i},{d_i}$, respectively, which are as follows
\begin{equation}
\begin{aligned}
\dot {\hat L} = &{\beta _{h}}\sum\limits_{i = 1}^n {\frac{{{w_i}}}{{2a_i^2}}\left( {\lambda _i^2 + {\beta _{zi}}z_{iN}^2} \right)\psi _i^T{\psi _i}}  - {\beta _{1}}{\hat L^r} - {\beta _{2}}{\hat L^m} \\
&- {\beta _{3}}{\hat L}\\
{\dot {\hat \tau }_i} = &{\delta _{i1}}\left( {1 - {w_i}} \right){H_i}{\lambda _i}\tanh \left( {\frac{{{H_i}{\lambda _i}}}{{{\eta _i}}}} \right) - {\delta _{i2}}{{\hat \tau }_i}^r - {\delta _{i3}}\hat \tau _i^m\\
&- {\delta _{i4}}\hat \tau _i\\
{\dot {\hat d}_i} =& {q_{i1}}{\lambda _i}\tanh \left( {\frac{{{\lambda _i}}}{{{\eta _{id}}}}} \right) - {q_{i2}}{{\hat d}_i}^r - {q_{i3}}\hat d_i^m- {q_{i4}}\hat d_i
\end{aligned}
\end{equation}

The prediction errors ${z_{iN}} = {\rho _i} - {\hat \rho _i}$ are extracted from the following improved F-FxT SPEM (denote ${\rho_{n+1}}=u$):
\begin{equation}
\begin{aligned}
{\dot{\hat \rho }_i} =& \frac{{{w_i}\left( {{{\bar \rho }_i}} \right)}}{{2a_i^2}}{{\hat L}}{z_{iN}}\psi _i^T{\psi _i} + \left( {1 - {w_i}} \right){{\hat \tau }_{iN}}{H_i}\tanh \left( {\frac{{{H_i}{z_{iN}}}}{{{\eta _{iN}}}}} \right)\\
 &+ {g_i}{\rho _{i+1}} + {r_{i1}}z_{iN}^r + {r_{i2}}z_{iN}^m + {r_{i3}}z_{iN}+{{\hat d}_{iN}}\tanh \left( {\frac{{{z_{iN}}}}{{{\eta _{idN}}}}} \right)
\end{aligned}
\end{equation}
where
\begin{equation}\small
\begin{aligned}
{\dot {\hat \tau }_{iN}} = &{\delta _{i1N}}\left( {1 - {w_i}} \right){H_i}{z_{iN}}\tanh \left( {\frac{{{H_i}{z_{iN}}}}{{{\eta _{iN}}}}} \right) - {\delta _{i2N}}{{\hat \tau }_{iN}}^r \\
&- {\delta _{i3N}}\hat \tau _{iN}^m- {\delta _{i4N}}\hat \tau _{iN}\\
{\dot {\hat d}_{iN}} =& {q_{i1N}}{z_{iN}}\tanh \left( {\frac{{{z_{iN}}}}{{{\eta _{idN}}}}} \right) - {q_{i2N}}{{\hat d}_{iN}}^r - {q_{i3N}}\hat d_{iN}^m- {q_{i4N}}\hat d_{iN}
\end{aligned}
\end{equation}

If we define $N = \max \left\{ {\max \left\{ {{{\bar \varepsilon }_i},\left\| {{l_i}} \right\|} \right\}} \right\},\left( {i = 1,2, \ldots ,n} \right)$ (Method 8-single) or $N = \max \left\{ {\left\| {{l_i}} \right\|} \right\},\left( {i = 1,2, \ldots ,n} \right)$ (Method 9-single), then another virtual control variables ${\alpha _i}$ and the control input $u = {\alpha _n}$ are constructed as follows
\begin{equation}\small
\begin{aligned}
&{\alpha _1} = \frac{1}{{{g_1}}}\left[ { - {k_{1}}{\zeta _1}- {k_{12}}\lambda _1^{r} + {{\dot y}_r} - {k_{11}}{\varphi _1}\left( {{\lambda _1}} \right) - {{\hat d}_1}\tanh \left( {\frac{{{\lambda _1}}}{{{\eta _{1d}}}}} \right)} \right]\\
 &- \frac{1}{{{g_1}}}\left[ {{w_1}{{\hat N}}{\psi _{1h}}\tanh \left( {\frac{{{\psi _{1h}}{\lambda _1}}}{{{\eta _{1\theta }}}}} \right) + \left( {1 - {w_1}} \right){{\hat \tau }_1}{H_1}\tanh \left( {\frac{{{H_1}{\lambda _1}}}{{{\eta _1}}}} \right)} \right]\\
&{\alpha _i} = \frac{1}{{{g_i}}}\left[ { - {k_{i2}}\lambda _i^{r} + {{\dot \rho }_{i,c}} - {g_{i - 1}}{\zeta _{i - 1}} - {k_{i1}}{\varphi _i}\left( {{\lambda _i}} \right) - {{\hat d}_i}\tanh \left( {\frac{{{\lambda _i}}}{{{\eta _{id}}}}} \right)} \right]\\
 &- \frac{1}{{{g_i}}}\left[{k_{i}}{\zeta _i}+ {{w_i}{{\hat N}}{\psi _{ih}}\tanh \left( {\frac{{{\psi _{ih}}{\lambda _i}}}{{{\eta _{i\theta }}}}} \right) + \left( {1 - {w_i}} \right){{\hat \tau }_i}{H_i}\tanh \left( {\frac{{{H_i}{\lambda _i}}}{{{\eta _i}}}} \right)} \right]
\end{aligned}
\end{equation}
where ${\hat N},{\hat \tau _i},{\hat d_i}$ are the estimation of ${N},{\tau _i},{d_i}$ respectively, which are as follows
\begin{equation}\small
\begin{aligned}
\dot {\hat N} = &{\beta _{h}}\sum\limits_{i = 1}^n {{w_i}\left[ {{\lambda _i}{\psi _{ih}}\tanh \left( {\frac{{{\lambda _i}{\psi _{ih}}}}{{{\eta _{i\theta }}}}} \right) + {\beta _{zi}}{z_{iN}}{\psi _{ih}}\tanh \left( {\frac{{{z_{iN}}{\psi _{ih}}}}{{{\eta _{i\theta N}}}}} \right)} \right]}  \\
&- {\beta _{1}}{\hat N^r} - {\beta _{2}}{\hat N^m} - {\beta _{3}}{\hat N}\\
{\dot {\hat \tau }_i} = &{\delta _{i1}}\left( {1 - {w_i}} \right){H_i}{\lambda _i}\tanh \left( {\frac{{{H_i}{\lambda _i}}}{{{\eta _i}}}} \right) - {\delta _{i2}}{{\hat \tau }_i}^r - {\delta _{i3}}\hat \tau _i^m- {\delta _{i4}}\hat \tau _i\\
{\dot {\hat d}_i} =& {q_{i1}}{\lambda _i}\tanh \left( {\frac{{{\lambda _i}}}{{{\eta _{id}}}}} \right) - {q_{i2}}{{\hat d}_i}^r - {q_{i3}}\hat d_i^m - {q_{i4}}\hat d_i
\end{aligned}
\end{equation}

The prediction errors ${z_{iN}} = {\rho _i} - {\hat \rho _i}$ are extracted from the following improved F-FxT SPEM (denote ${\rho_{n+1}}=u$):
\begin{equation}
\begin{aligned}
{\dot{\hat \rho }_i} = &{w_i}{{\hat N}}{\psi _{ih}}\tanh \left( {\frac{{{\psi _{ih}}{z_{iN}}}}{{{\eta _{i\theta N}}}}} \right) + \left( {1 - {w_i}} \right){{\hat \tau }_{iN}}{H_i}\tanh \left( {\frac{{{H_i}{z_{iN}}}}{{{\eta _{iN}}}}} \right)\\
& + {g_i}{\rho _{i+1}} + {r_{i1}}z_{iN}^r + {r_{i2}}z_{iN}^m + {r_{i3}}z_{iN} +{{\hat d}_{iN}}\tanh \left( {\frac{{{z_{iN}}}}{{{\eta _{idN}}}}} \right)
\end{aligned}
\end{equation}
where
\begin{equation}\small
\begin{aligned}
{\dot {\hat \tau }_{iN}} = &{\delta _{i1N}}\left( {1 - {w_i}} \right){H_i}{z_{iN}}\tanh \left( {\frac{{{H_i}{z_{iN}}}}{{{\eta _{iN}}}}} \right) - {\delta _{i2N}}{{\hat \tau }_{iN}}^r\\
& - {\delta _{i3N}}\hat \tau _{iN}^m- {\delta _{i4N}}\hat \tau _{iN}\\
{\dot {\hat d}_{iN}} = &{q_{i1N}}{z_{iN}}\tanh \left( {\frac{{{z_{iN}}}}{{{\eta _{idN}}}}} \right) - {q_{i2N}}{{\hat d}_{iN}}^r - {q_{i3N}}\hat d_{iN}^m- {q_{i4N}}\hat d_{iN}
\end{aligned}
\end{equation}
\begin{theorem}
Considering the uncertain strict-feedback system (1), the compensation system (63), the prediction errors and modified F-FxT SPEM (74) or (78), the virtual control law and controller (72) or (76), the adaptive laws (73) or (77), all signals in the closed-loop system are globally F-FxT bounded, and the tracking error tends to an arbitrarily small domain near zero for fast fixed time. 
\end{theorem}
\begin{remark}
If we set ${H_i}\left( {{{\bar \rho }_i}} \right) = 1$, the control strategies we designed can be adopted to address the case where the boundaries of the uncertain terms are unknown. 
\end{remark}
\section{Globally composite neural FnT/FxT controllers with input nonlinearities}
\subsection{Actuator magnitude and rate saturations with known parameters}
This section takes Method 4 as an example to illustrate the controller design process under input magnitude and rate (IMR) saturations, which is also applicable to Methods 1 to 9 and Methods 1-single to 9-single.

The IMR saturations are described as $\underline u \le  u \le \bar u,\underline \gamma  \le \dot u \le \bar \gamma$. Following [36], the IMR saturations are transformed into the following form
\begin{equation}
{\Psi _{\min }} \le u \le {\Psi _{\max }}
\end{equation}
where ${\Psi _{\min }} = \max \left\{ {u\left( {t - {T_r}} \right) + \underline \gamma {T_r},\underline u} \right\},{\Psi _{\max }} = \min \left\{ {u\left( {t - {T_r}} \right) + \bar \gamma {T_r},\bar u} \right\}$, and $T_r>0$ is the sample time.

Then, system (1) subject to IMR saturations is formulated as
\begin{equation}
\left\{  \begin{aligned}
&{{\dot \rho }_i} = {h_i}\left( {{{\bar \rho }_i}} \right) + {g_i}\left( {{{\bar \rho }_i}} \right){\rho _{i + 1}} + {d_i},i = 1,2, \ldots ,n - 1\\
&{{\dot \rho }_n} = {h_n}\left( {{{\bar \rho }_n}} \right) + {g_n}\left( {{{\bar \rho }_n}} \right)\Psi \left( u \right) + {d_n}\\
&y = {\rho _1}
 \end{aligned} \right.
\end{equation}
where
\begin{equation}
\Psi \left( u \right) = \left\{ 
\begin{aligned}
&{\Psi _{\min }}, u \le {\Psi _{\min }}\\
&u, {\Psi _{\min }} \le u \le {\Psi _{\max }}\\
&{\Psi _{\max }}, u \ge {\Psi _{\max }}
\end{aligned} 
\right.
\end{equation}

To work out the known IMR saturations, a novel adaptive F-FxT auxiliary variable system is designed to effectively diminish saturation time, and further alleviate the impact of the auxiliary signal on the tracking error, which is expressed as follows
\begin{equation}
\dot \Delta  =  - {c_1}{P_\Delta }{\mathop{\rm sig}\nolimits} {\left( \Delta  \right)^{{b_1}}} - {c_2}{P_\Delta }{\mathop{\rm sig}\nolimits} {\left( \Delta  \right)^{{b_2}}} - {c_3}{P_\Delta }\Delta  + {g_n}\left( {\Psi \left( u \right) - u} \right)
\end{equation}
where ${c_1} > 0,{c_2} > 0,{c_3} > 0,0 < {b_1} < 1,{b_2} > 1$, and ${P_\Delta }$ is the adaptive positive gain, which is expressed as
\begin{equation}\small
\begin{aligned}
&{\dot P_\Delta } = \left\{ \begin{aligned}
 &- {k_{p1}}P_\Delta ^{{{\bar b}_{p1}}} - {k_{p2}}P_\Delta ^{{{\bar b}_{p2}}} + {k_p}{P^L},&\Psi \left( u \right) \ne u\\
&{{\mathop{\rm Proj}\nolimits} _{{P_\Delta }}}\left( {{k_{p3}}P_\Delta ^{{b_3}} + {k_{p4}}P_\Delta ^{ - {b_4}} + {k_{p5}}{P_\Delta } + {c_p}} \right),&\Psi \left( u \right) = u
\end{aligned} \right. \\
&{{\mathop{\rm Proj}\nolimits} _{{P_\Delta }}}\left(  \cdot  \right) = \left\{ \begin{aligned}
&0,\left( {{P_\Delta } = {P^U}, \cdot  \ge 0} \right),or \left( {{P_\Delta } = {P^L}, \cdot  \le 0} \right)\\
&\cdot ,otherwise
\end{aligned} \right.
\end{aligned}
\end{equation}
where ${k_p},{k_{p1}},{k_{p2}},{k_{p3}},{k_{p4}},{k_{p5}},{b_3},{b_4},{c_p}$ are positive design parameters, and ${\bar b_{p1}} = b_{p1}^{{\mathop{\rm sgn}} \left( {{P_\Delta } - 1} \right)},{\bar b_{p2}} = b_{p2}^{{\mathop{\rm sgn}} \left( {1 - {P_\Delta }} \right)},{b_{p1}} > 1,0 < {b_{p2}} < 1$. ${P^L}$ and ${P^U}$ are the boundaries of ${P_\Delta }$
\begin{remark}
The $\Psi \left( u \right)$ can also be replaced by the smooth function in [36].
\end{remark}

The compensation tracking errors are defined as below
\begin{equation}
{\lambda _i} = {\zeta _i} - {\sigma _i},{\lambda _n} = {\zeta _n} - {\sigma _n} - \Delta ,i = 1,2, \ldots ,n - 1
\end{equation}

Then, the virtual control variables ${\alpha _i}$ and the control input $u$ are developed as follows (Method 10)
\begin{equation}\tiny
\begin{aligned}
{\alpha _1}& = \frac{1}{{{g_1}}}\left[ { - {k_{12}}\lambda _1^{r} + {{\dot y}_r} - {k_{11}}{\varphi _1}\left( {{\lambda _1}} \right) - {{\hat d}_1}\tanh \left( {\frac{{{\lambda _1}}}{{{\eta _{1d}}}}} \right)} \right]\\
 &- \frac{1}{{{g_1}}}\left[ {\frac{{{w_1}\left( {{{\bar \rho }_1}} \right){\lambda _1}}}{{2a_1^2}}{{\hat L}_1}\psi _1^T{\psi _1} + \left( {1 - {w_1}} \right){{\hat \tau }_1}{H_1}\tanh \left( {\frac{{{H_1}{\lambda _1}}}{{{\eta _1}}}} \right)} \right],\\
{\alpha _i} &= \frac{1}{{{g_i}}}\left[ { - {k_{i2}}\lambda _i^{r} + {{\dot \rho }_{i,c}} - {g_{i - 1}}{\zeta _{i - 1}} - {k_{i1}}{\varphi _i}\left( {{\lambda _i}} \right) - {{\hat d}_i}\tanh \left( {\frac{{{\lambda _i}}}{{{\eta _{id}}}}} \right)} \right]\\
 &- \frac{1}{{{g_i}}}\left[ {\frac{{{w_i}\left( {{{\bar \rho }_i}} \right){\lambda _i}}}{{2a_i^2}}{{\hat L}_i}\psi _i^T{\psi _i} + \left( {1 - {w_i}} \right){{\hat \tau }_i}{H_i}\tanh \left( {\frac{{{H_i}{\lambda _i}}}{{{\eta _i}}}} \right)} \right] ,(i = 2, \ldots ,n - 2),\\
{\alpha _i} &= \frac{1}{{{g_i}}}\left[ { - {k_{i2}}\lambda _i^{r} + {{\dot \rho }_{i,c}} - {g_{i - 1}}{\zeta _{i - 1}} - {k_{i1}}{\varphi _i}\left( {{\lambda _i}} \right) - {{\hat d}_i}\tanh \left( {\frac{{{\lambda _i}}}{{{\eta _{id}}}}} \right)} \right]\\
 &- \frac{1}{{{g_i}}}\left[ {\frac{{{w_i}\left( {{{\bar \rho }_i}} \right){\lambda _i}}}{{2a_i^2}}{{\hat L}_i}\psi _i^T{\psi _i} + \left( {1 - {w_i}} \right){{\hat \tau }_i}{H_i}\tanh \left( {\frac{{{H_i}{\lambda _i}}}{{{\eta _i}}}} \right)} \right]-\Delta ,(i=n-1),\\
u &= \frac{1}{{{g_n}}}\left[ { - {k_{n2}}\lambda _n^{r} + {{\dot \rho }_{n,c}} - {g_{n - 1}}{\zeta _{n - 1}} - {k_{n1}}{\varphi _n}\left( {{\lambda _n}} \right) - {{\hat d}_n}\tanh \left( {\frac{{{\lambda _n}}}{{{\eta _{nd}}}}} \right)} \right]\\
 &- \frac{1}{{{g_n}}}\left[ {\frac{{{w_n}\left( {{{\bar \rho }_n}} \right){\lambda _n}}}{{2a_n^2}}{{\hat L}_n}\psi _n^T{\psi _n} + \left( {1 - {w_n}} \right){{\hat \tau }_n}{H_n}\tanh \left( {\frac{{{H_n}{\lambda _n}}}{{{\eta _n}}}} \right)} \right]\\
&- \frac{1}{{{g_n}}}\left[ {{c_1}{P_\Delta }{\mathop{\rm sig}\nolimits} {{\left( \Delta  \right)}^{{b_1}}} + {c_2}{P_\Delta }{\mathop{\rm sig}\nolimits} {{\left( \Delta  \right)}^{{b_2}}} + {c_3}{P_\Delta }\Delta } \right],
\end{aligned}
\end{equation} 
 where ${\hat L_i},{\hat \tau _i},{\hat d_i}$ are as follows
\begin{equation}
\begin{aligned}
&{\dot{ \hat L}_i} = {\beta _{hi}}{w_i}\left( {\lambda _i^2 + {\beta _{zi}}z_{iN}^2} \right)\psi _i^T{\psi _i} - {\beta _{i1}}{{\hat L}_i}^r - {\beta _{i2}}\hat L_i^m\\
&{\dot {\hat \tau }_i} = {\delta _{i1}}\left( {1 - {w_i}} \right){H_i}{\lambda _i}\tanh \left( {\frac{{{H_i}{\lambda _i}}}{{{\eta _i}}}} \right) - {\delta _{i2}}{{\hat \tau }_i}^r - {\delta _{i3}}\hat \tau _i^m\\
&{\dot {\hat d}_i} = {q_{i1}}{\lambda _i}\tanh \left( {\frac{{{\lambda _i}}}{{{\eta _{id}}}}} \right) - {q_{i2}}{{\hat d}_i}^r - {q_{i3}}\hat d_i^m
\end{aligned}
\end{equation}

The prediction errors ${z_{iN}} = {\rho _i} - {\hat \rho _i}$ are extracted from the following modified FxT SPEM (denote ${\rho_{n+1}}=\Psi \left( u \right)$):
\begin{equation}
\begin{aligned}
{\dot{\hat \rho }_i} =& \frac{{{w_i}\left( {{{\bar \rho }_i}} \right)}}{{2a_i^2}}{{\hat L}_i}{z_{iN}}\psi _i^T{\psi _i} + \left( {1 - {w_i}} \right){{\hat \tau }_{iN}}{H_i}\tanh \left( {\frac{{{H_i}{z_{iN}}}}{{{\eta _{iN}}}}} \right)\\
 &+ {g_i}{\rho _{i+1}} + {r_{i1}}z_{iN}^r + {r_{i2}}z_{iN}^m + {{\hat d}_{iN}}\tanh \left( {\frac{{{z_{iN}}}}{{{\eta _{idN}}}}} \right)
\end{aligned}
\end{equation}
where
\begin{equation}\small
\begin{aligned}
&{\dot {\hat \tau }_{iN}} = {\delta _{i1N}}\left( {1 - {w_i}} \right){H_i}{z_{iN}}\tanh \left( {\frac{{{H_i}{z_{iN}}}}{{{\eta _{iN}}}}} \right) - {\delta _{i2N}}{{\hat \tau }_{iN}}^r - {\delta _{i3N}}\hat \tau _{iN}^m\\
&{\dot {\hat d}_{iN}} = {q_{i1N}}{z_{iN}}\tanh \left( {\frac{{{z_{iN}}}}{{{\eta _{idN}}}}} \right) - {q_{i2N}}{{\hat d}_{iN}}^r - {q_{i3N}}\hat d_{iN}^m
\end{aligned}
\end{equation}
\begin{theorem}
Considering the system (81) under bounded $u$, the compensation system (39), the adaptive F-FxT auxiliary variable system (83), the prediction errors (88), and the virtual control law and controller (86), all signals in the closed-loop system are globally FxT bounded, and the tracking error tends to an arbitrarily small domain near zero for fixed time. 
\end{theorem}
\begin{remark}
The developed Method 10 provides a framework for implementing global composite neural FnT/FxT control under IMR saturations, and the auxiliary signal compensators in \cite{sat_a1,sat_a2,sat_f1,sat_f2,sat_fa} can also be applied to this.
\end{remark}
\subsection{Multiple input nonlinearities with unknown parameters}
This section takes Method 4 and Method 4-single as examples to illuminate the controller design process under unknown input magnitude saturation, rate saturation and dead-zone simultaneously, which is also applicable to Methods 1 to 9 and Methods 1-single to 9-single.

By adopting (80), the input subject to unknown magnitude saturation, rate saturation and dead-zone is formulated as
\begin{equation}
\Psi \left( u \right) = \left\{ 
\begin{aligned}
&{\Psi _{\min }},u \le {c_{s1}}\\
&{t_s}\left( u \right),{c_{s1}} \le u \le {c_{s2}}\\
&0,{c_{s2}} \le u \le {c_{r1}}\\
&{t_r}\left( u \right),{c_{r1}} \le u \le {c_{r2}}\\
&{\Psi _{\max }},u \ge {c_{r2}}
\end{aligned} \right.
\end{equation} 
where the assumption of ${t_s}\left( u \right),{t_r}\left( u \right)$  is the same as [41], and ${c_{s1}} < {c_{s2}} < 0,0 < {c_{r1}} < {c_{r2}}$ are unknown parameters.

Enlightened by [41], by introducing a smooth function to model the non-smooth actuator nonlinearities, $\Psi \left( u \right)$ can be rewritten as 
\begin{equation}
\Psi \left( u \right) = {g_u} \cdot u + P\left( u \right)
\end{equation} 
with $0 < b < {g_u} < \bar b,\left| {P\left( u \right)} \right| \le \bar P$.

Define tracking errors as follows
\begin{equation}
\begin{aligned}
&{\zeta _1} = {\rho _1} - {y_r}\\
&{\zeta _i} = {\rho _i} - {\rho _{i,c}},\quad i = 2,3, \ldots ,n
\end{aligned}
\end{equation}
where ${\rho _{i,c}}$ are obtained by the FxT command filter \cite{2011Uniform} with virtual control variables ${\alpha _{i - 1}}$ being inputs.

To alleviate the impact of $\left( {{\rho _{i,c}} - {\alpha _{i - 1}}} \right)$, the following compensation system is designed
\begin{equation}
\begin{aligned}
&{{\dot \sigma }_1} =  - {k_{11}}\sigma _1^m - {k_{12}}\sigma _1^r + {g_1}\left( {{\rho _{2,c}} - {\alpha _1}} \right) + {g_1}{\sigma _2}\\
&{{\dot \sigma }_i} =  - {k_{i1}}\sigma _i^m - {k_{i2}}\sigma _i^r + {g_i}\left( {{\rho _{\left( {i + 1} \right),c}} - {\alpha _i}} \right) - {g_{i - 1}}{\sigma _{i - 1}} + {g_i}{\sigma _{i + 1}}\\
&{{\dot \sigma }_n} =  0
\end{aligned}
\end{equation}

Then, the virtual control variables ${\alpha _i}$ and the control input $u$ are established as (Method 11)
\begin{equation}\small
\begin{aligned}
&{\alpha _1} = \frac{1}{{{g_1}}}\left[ { - {k_{12}}\lambda _1^{r} + {{\dot y}_r} - {k_{11}}{\varphi _1}\left( {{\lambda _1}} \right) - {{\hat d}_1}\tanh \left( {\frac{{{\lambda _1}}}{{{\eta _{1d}}}}} \right)} \right]\\
 &- \frac{1}{{{g_1}}}\left[ {\frac{{{w_1}\left( {{{\bar \rho }_1}} \right){\lambda _1}}}{{2a_1^2}}{{\hat L}_1}\psi _1^T{\psi _1} + \left( {1 - {w_1}} \right){{\hat \tau }_1}{H_1}\tanh \left( {\frac{{{H_1}{\lambda _1}}}{{{\eta _1}}}} \right)} \right],\\
&{\alpha _i} = \frac{1}{{{g_i}}}\left[ { - {k_{i2}}\lambda _i^{r} + {{\dot \rho }_{i,c}} - {g_{i - 1}}{\zeta _{i - 1}} - {k_{i1}}{\varphi _i}\left( {{\lambda _i}} \right) - {{\hat d}_i}\tanh \left( {\frac{{{\lambda _i}}}{{{\eta _{id}}}}} \right)} \right]\\
 &- \frac{1}{{{g_i}}}\left[ {\frac{{{w_i}\left( {{{\bar \rho }_i}} \right){\lambda _i}}}{{2a_i^2}}{{\hat L}_i}\psi _i^T{\psi _i} + \left( {1 - {w_i}} \right){{\hat \tau }_i}{H_i}\tanh \left( {\frac{{{H_i}{\lambda _i}}}{{{\eta _i}}}} \right)} \right],(i = 2, \ldots ,n - 1),\\
&u = \frac{1}{{{g_n}}}\left[ { - {k_{n2}}\lambda _n^{r} - {k_{n1}}{\varphi _n}\left( {{\lambda _n}} \right) - {{\hat d}_n}\tanh \left( {\frac{{{\lambda _n}}}{{{\eta _{nd}}}}} \right)} \right]\\
 &- \frac{1}{{{g_n}}}\left[ {\frac{{{w_n}\left( {{{\bar \rho }_n}} \right){\lambda _n}}}{{2a_n^2}}{{\hat L}_n}\psi _n^T{\psi _n} + \left( {1 - {w_n}} \right){{\hat \tau }_n}\tanh \left( {\frac{{{\lambda _n}}}{{{\eta _n}}}} \right)} \right],
\end{aligned}
\end{equation} 
where ${\hat L_i},{\hat \tau _i},{\hat d_i}$ are as follows
\begin{equation}
\begin{aligned}
{\dot{ \hat L}_i}=  &{\beta _{hi}}{w_i}\left( {\lambda _i^2 + {\beta _{zi}}z_{iN}^2} \right)\psi _i^T{\psi _i} - {\beta _{i1}}{{\hat L}_i}^r \\
&- {\beta _{i2}}\hat L_i^m,(i = 1, \ldots ,n - 1),\\
{\dot{ \hat L}_n} = &{\beta _{hn}}{w_u}{\lambda _n^2}\psi _n^T{\psi _n} - {\beta _{n1}}{{\hat L}_n}^r - {\beta _{n2}}\hat L_n^m,\\
{\dot {\hat \tau }_i} = &{\delta _{i1}}\left( {1 - {w_i}} \right){H_i}{\lambda _i}\tanh \left( {\frac{{{H_i}{\lambda _i}}}{{{\eta _i}}}} \right) - {\delta _{i2}}{{\hat \tau }_i}^r \\
&- {\delta _{i3}}\hat \tau _i^m,(i = 1, \ldots ,n - 1),\\
{\dot {\hat \tau }_n} = &{\delta _{n1}}\left( {1 - {w_n}} \right){\lambda _n}\tanh \left( {\frac{{{\lambda _n}}}{{{\eta _n}}}} \right) - {\delta _{n2}}{{\hat \tau }_n}^r - {\delta _{n3}}\hat \tau _n^m\\
{\dot {\hat d}_i} = &{q_{i1}}{\lambda _i}\tanh \left( {\frac{{{\lambda _i}}}{{{\eta _{id}}}}} \right) - {q_{i2}}{{\hat d}_i}^r - {q_{i3}}\hat d_i^m
\end{aligned}
\end{equation}

The prediction errors ${z_{iN}} = {\rho _i} - {\hat \rho _i},(i = 1, \ldots ,n - 1)$ are extracted from the following modified FxT SPEM:
\begin{equation}
\begin{aligned}
{\dot{\hat \rho }_i} =& \frac{{{w_i}\left( {{{\bar \rho }_i}} \right)}}{{2a_i^2}}{{\hat L}_i}{z_{iN}}\psi _i^T{\psi _i} + \left( {1 - {w_i}} \right){{\hat \tau }_{iN}}{H_i}\tanh \left( {\frac{{{H_i}{z_{iN}}}}{{{\eta _{iN}}}}} \right)\\
 &+ {g_i}{\rho _{i+1}} + {r_{i1}}z_{iN}^r + {r_{i2}}z_{iN}^m + {{\hat d}_{iN}}\tanh \left( {\frac{{{z_{iN}}}}{{{\eta _{idN}}}}} \right)
\end{aligned}
\end{equation}
where
\begin{equation}\small
\begin{aligned}
&{\dot {\hat \tau }_{iN}} = {\delta _{i1N}}\left( {1 - {w_i}} \right){H_i}{z_{iN}}\tanh \left( {\frac{{{H_i}{z_{iN}}}}{{{\eta _{iN}}}}} \right) - {\delta _{i2N}}{{\hat \tau }_{iN}}^r - {\delta _{i3N}}\hat \tau _{iN}^m\\
&{\dot {\hat d}_{iN}} = {q_{i1N}}{z_{iN}}\tanh \left( {\frac{{{z_{iN}}}}{{{\eta _{idN}}}}} \right) - {q_{i2N}}{{\hat d}_{iN}}^r - {q_{i3N}}\hat d_{iN}^m
\end{aligned}
\end{equation}

To further decline the updating weights, we define $L = \max \left\{ {{{\left\| {{l_i}} \right\|}^2}} \right\},\left( {i = 1,2, \ldots ,n-1} \right)$. 

Then, the virtual control variables ${\alpha _i}$ and the control input $u$ are developed as follows (Method 12)
\begin{equation}\small
\begin{aligned}
&{\alpha _1} = \frac{1}{{{g_1}}}\left[ { - {k_{12}}\lambda _1^{r} + {{\dot y}_r} - {k_{11}}{\varphi _1}\left( {{\lambda _1}} \right) - {{\hat d}_1}\tanh \left( {\frac{{{\lambda _1}}}{{{\eta _{1d}}}}} \right)} \right]\\
 &- \frac{1}{{{g_1}}}\left[ {\frac{{{w_1}{\lambda _1}}}{{2a_1^2}}{{\hat L}}\psi _1^T{\psi _1} + \left( {1 - {w_1}} \right){{\hat \tau }_1}{H_1}\tanh \left( {\frac{{{H_1}{\lambda _1}}}{{{\eta _1}}}} \right)} \right]\\
&{\alpha _i} = \frac{1}{{{g_i}}}\left[ { - {k_{i2}}\lambda _i^{r} + {{\dot \rho }_{i,c}} - {g_{i - 1}}{\zeta _{i - 1}} - {k_{i1}}{\varphi _i}\left( {{\lambda _i}} \right) - {{\hat d}_i}\tanh \left( {\frac{{{\lambda _i}}}{{{\eta _{id}}}}} \right)} \right]\\
 &- \frac{1}{{{g_i}}}\left[ {\frac{{{w_i}{\lambda _i}}}{{2a_i^2}}{{\hat L}}\psi _i^T{\psi _i} + \left( {1 - {w_i}} \right){{\hat \tau }_i}{H_i}\tanh \left( {\frac{{{H_i}{\lambda _i}}}{{{\eta _i}}}} \right)} \right],(i = 2, \ldots ,n - 1),\\
&u = \frac{1}{{{g_n}}}\left[ { - {k_{n2}}\lambda _n^{r} - {k_{n1}}{\varphi _n}\left( {{\lambda _n}} \right) - {{\hat d}_n}\tanh \left( {\frac{{{\lambda _n}}}{{{\eta _{nd}}}}} \right)} \right]\\
 &- \frac{1}{{{g_n}}}\left[ {\frac{{{w_n}\left( {{{\bar \rho }_n}} \right){\lambda _n}}}{{2a_n^2}}{{\hat L}_u}\psi _n^T{\psi _n} + \left( {1 - {w_n}} \right){{\hat \tau }_n}\tanh \left( {\frac{{{\lambda _n}}}{{{\eta _n}}}} \right)} \right],
\end{aligned}
\end{equation} 
where ${\hat L},{\hat \tau _i},{\hat d_i}$ are as follows
\begin{equation}
\begin{aligned}
\dot {\hat L} = &{\beta _{h}}\sum\limits_{i = 1}^{n-1} {\frac{{{w_i}}}{{2a_i^2}}\left( {\lambda _i^2 + {\beta _{zi}}z_{iN}^2} \right)\psi _i^T{\psi _i}}  - {\beta _{1}}{\hat L^r} - {\beta _{2}}{\hat L^m}\\
{\dot{ \hat L}_u} = &{\beta _{hn}}{w_u}{\lambda _n^2}\psi _n^T{\psi _n} - {\beta _{n1}}{{\hat L}_u}^r - {\beta _{n2}}\hat L_u^m\\
{\dot {\hat \tau }_i} = &{\delta _{i1}}\left( {1 - {w_i}} \right){H_i}{\lambda _i}\tanh \left( {\frac{{{H_i}{\lambda _i}}}{{{\eta _i}}}} \right) - {\delta _{i2}}{{\hat \tau }_i}^r \\
&- {\delta _{i3}}\hat \tau _i^m,(i = 1, \ldots ,n - 1),\\
{\dot {\hat \tau }_n} = &{\delta _{n1}}\left( {1 - {w_n}} \right){\lambda _n}\tanh \left( {\frac{{{\lambda _n}}}{{{\eta _n}}}} \right) - {\delta _{n2}}{{\hat \tau }_n}^r - {\delta _{n3}}\hat \tau _n^m\\
{\dot {\hat d}_i} = &{q_{i1}}{\lambda _i}\tanh \left( {\frac{{{\lambda _i}}}{{{\eta _{id}}}}} \right) - {q_{i2}}{{\hat d}_i}^r - {q_{i3}}\hat d_i^m
\end{aligned}
\end{equation}

The prediction errors ${z_{iN}} = {\rho _i} - {\hat \rho _i},(i = 1, \ldots ,n - 1)$ are extracted from the following improved fixed-time SPEM:
\begin{equation}
\begin{aligned}
{\dot{\hat \rho }_i} =& \frac{{{w_i}\left( {{{\bar \rho }_i}} \right)}}{{2a_i^2}}{{\hat L}}{z_{iN}}\psi _i^T{\psi _i} + \left( {1 - {w_i}} \right){{\hat \tau }_{iN}}{H_i}\tanh \left( {\frac{{{H_i}{z_{iN}}}}{{{\eta _{iN}}}}} \right)\\
 &+ {g_i}{\rho _{i+1}} + {r_{i1}}z_{iN}^r + {r_{i2}}z_{iN}^m + {{\hat d}_{iN}}\tanh \left( {\frac{{{z_{iN}}}}{{{\eta _{idN}}}}} \right)
\end{aligned}
\end{equation}
where
\begin{equation}\small
\begin{aligned}
&{\dot {\hat \tau }_{iN}} = {\delta _{i1N}}\left( {1 - {w_i}} \right){H_i}{z_{iN}}\tanh \left( {\frac{{{H_i}{z_{iN}}}}{{{\eta _{iN}}}}} \right) - {\delta _{i2N}}{{\hat \tau }_{iN}}^r - {\delta _{i3N}}\hat \tau _{iN}^m\\
&{\dot {\hat d}_{iN}} = {q_{i1N}}{z_{iN}}\tanh \left( {\frac{{{z_{iN}}}}{{{\eta _{idN}}}}} \right) - {q_{i2N}}{{\hat d}_{iN}}^r - {q_{i3N}}\hat d_{iN}^m
\end{aligned}
\end{equation}
\begin{theorem}
Considering the system (81) subject to (90), the compensation system (93), the prediction errors (96) or (100), the virtual control law and controller (94) or (98), all signals in the closed-loop system are globally FxT bounded, and the tracking error tends to an arbitrarily small domain near zero for fixed time. 
\end{theorem}
\begin{remark}
The established Method 11 and Method 12 can also be applied to achieve global composite neural FnT/FxT control under both input saturation and dead-zone, IMR saturations, or single input nonlinearity.
\end{remark}
\begin{figure}
	\centering
	\subfigure[Tracking errors]{
	\includegraphics[width=0.22\textwidth]{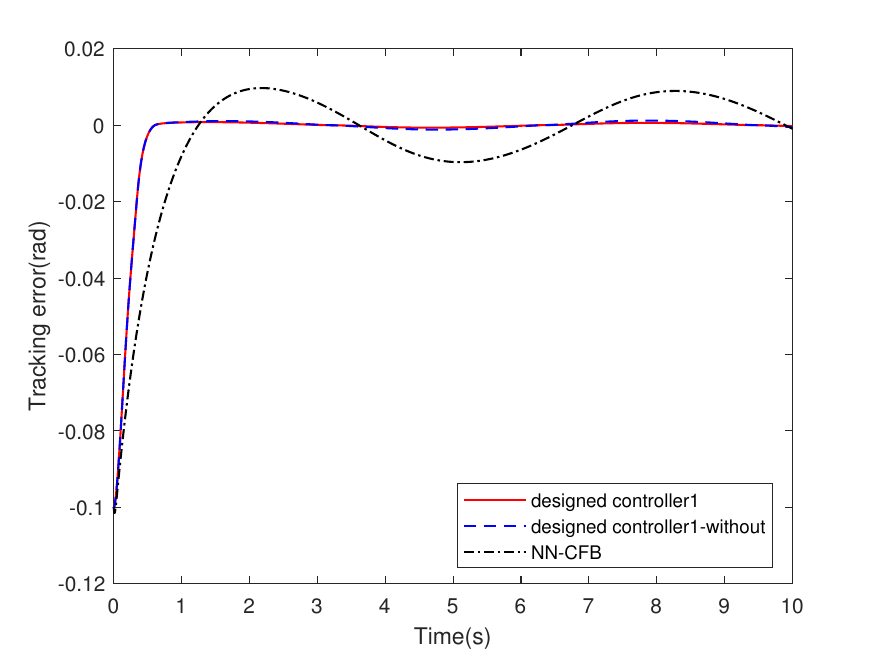}}
	\subfigure[Switching signal]{
	\includegraphics[width=0.22\textwidth]{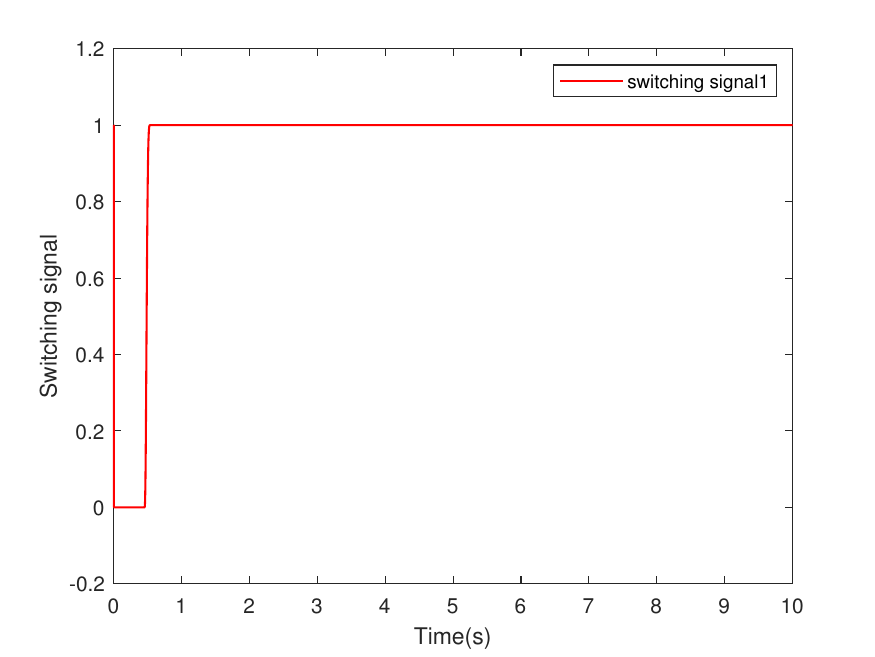}}
	\caption{Results of Method 1}
\end{figure}
\begin{figure}
	\centering
	\subfigure[Tracking errors]{
	\includegraphics[width=0.22\textwidth]{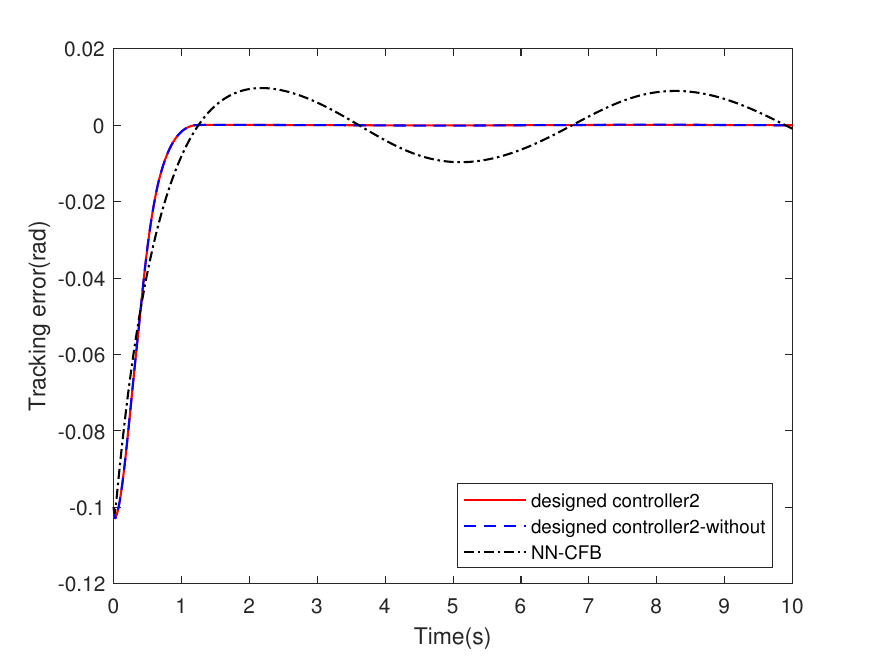}}
	\subfigure[Switching signal]{
	\includegraphics[width=0.22\textwidth]{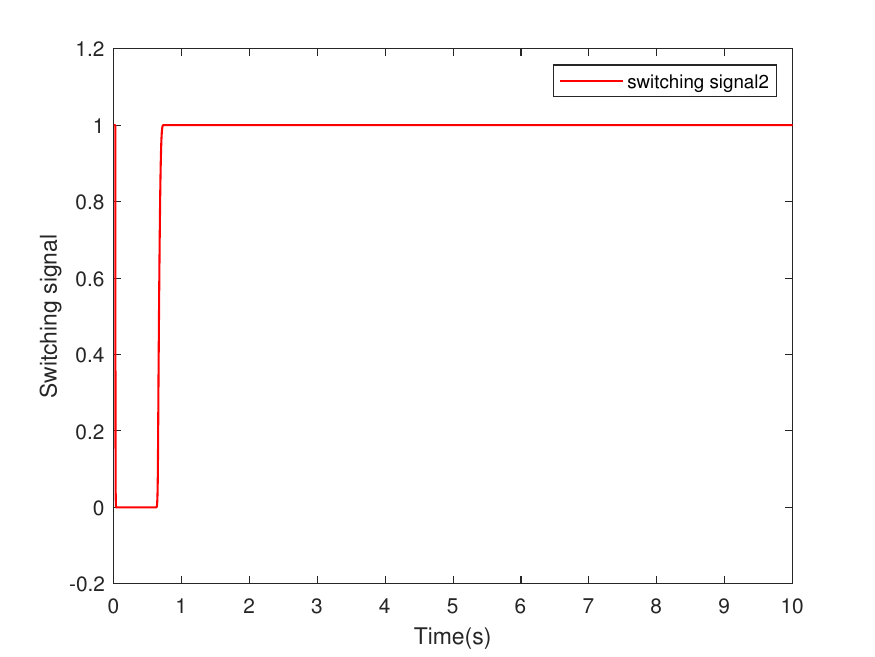}}
	\caption{Results of Method 2}
\end{figure}
\begin{figure}
	\centering
	\subfigure[Tracking errors]{
	\includegraphics[width=0.22\textwidth]{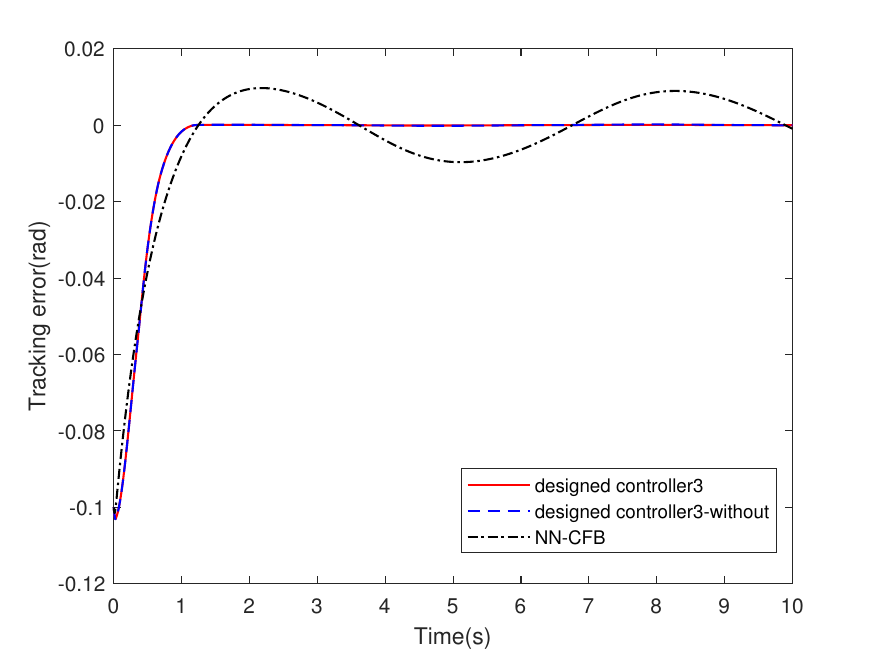}}
	\subfigure[Switching signal]{
	\includegraphics[width=0.22\textwidth]{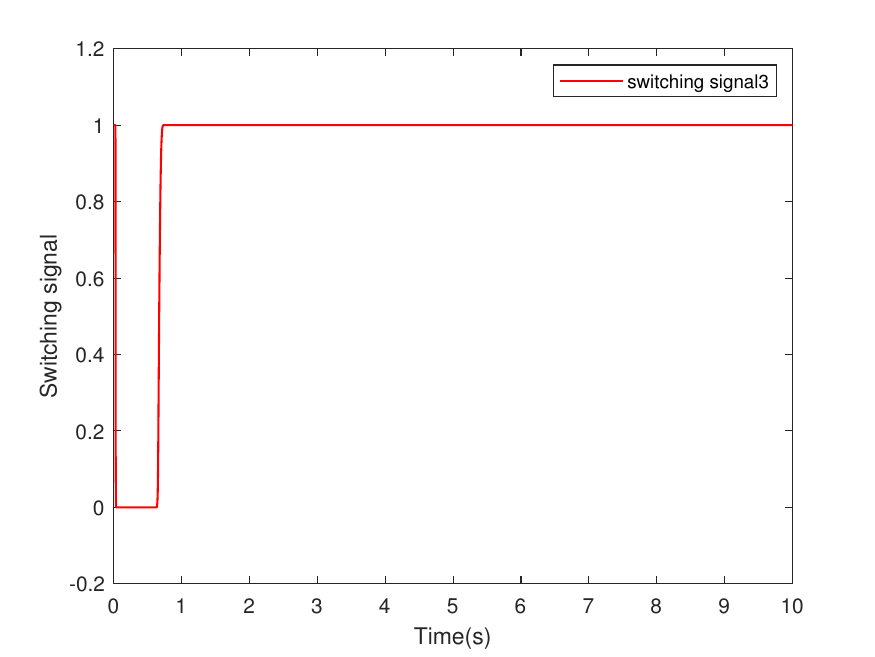}}
	\caption{Results of Method 3}
\end{figure}
\begin{figure}
	\centering
	\subfigure[Tracking errors]{
	\includegraphics[width=0.22\textwidth]{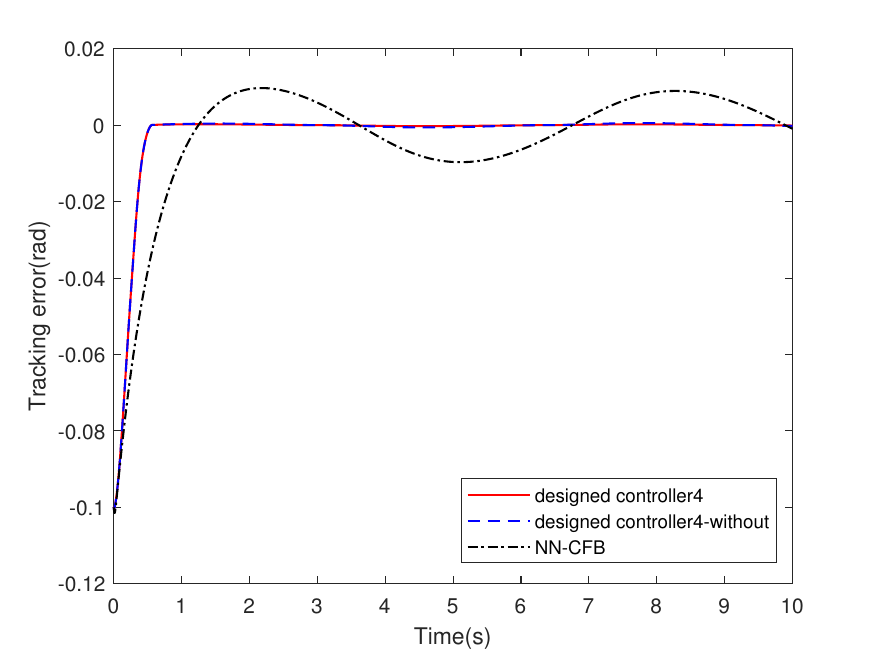}}
	\subfigure[Switching signal]{
	\includegraphics[width=0.22\textwidth]{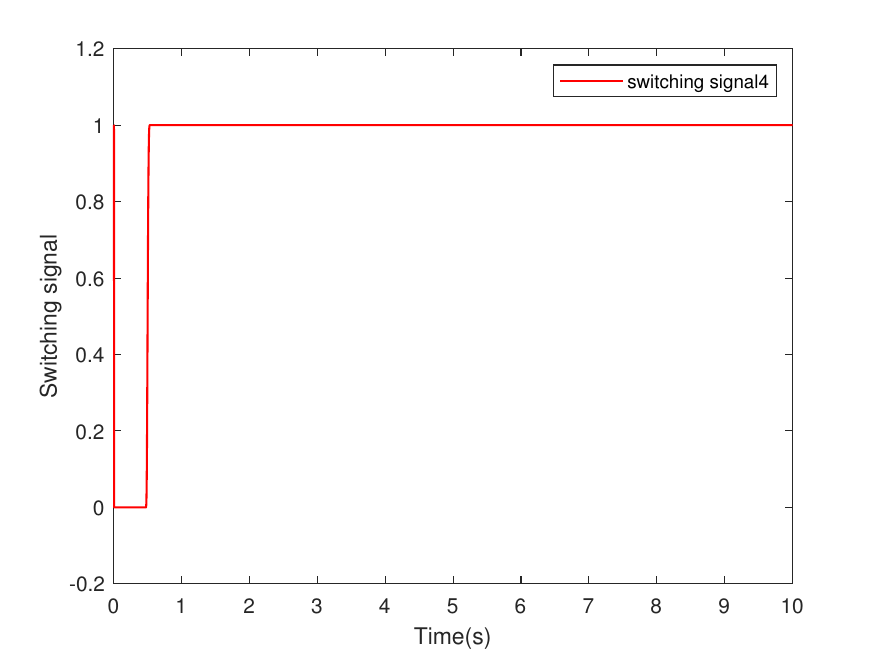}}
	\caption{Results of Method 4}
\end{figure}
\begin{figure}
	\centering
	\subfigure[Tracking errors]{
	\includegraphics[width=0.22\textwidth]{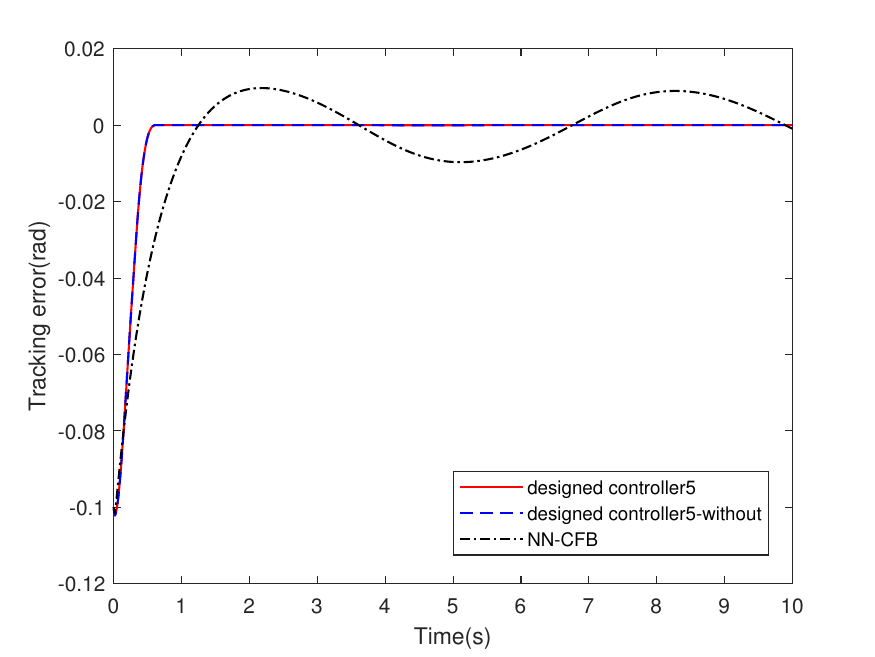}}
	\subfigure[Switching signal]{
	\includegraphics[width=0.22\textwidth]{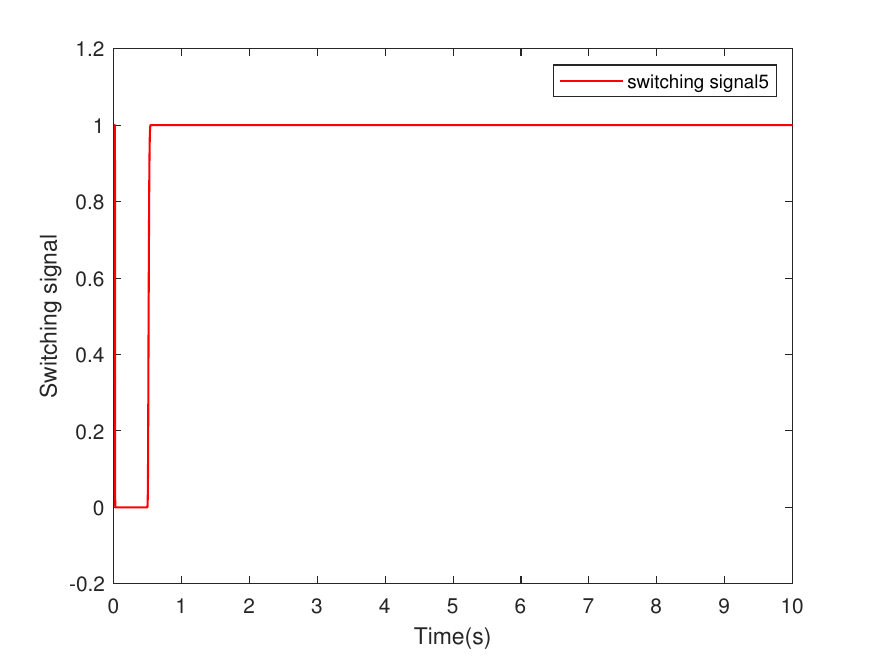}}
	\caption{Results of Method 5}
\end{figure}
\begin{figure}
	\centering
	\subfigure[Tracking errors]{
	\includegraphics[width=0.2\textwidth]{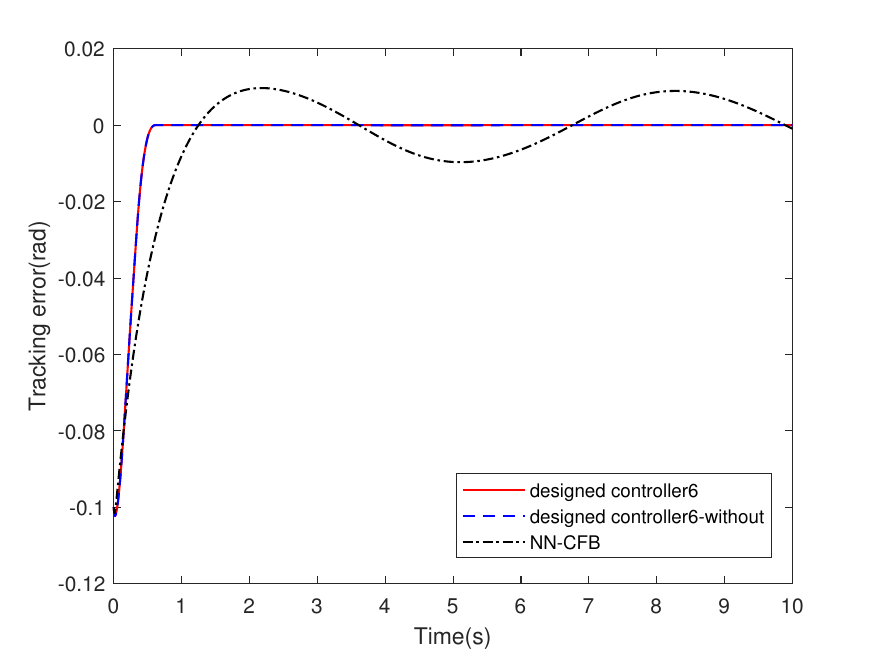}}
	\subfigure[Switching signal]{
	\includegraphics[width=0.22\textwidth]{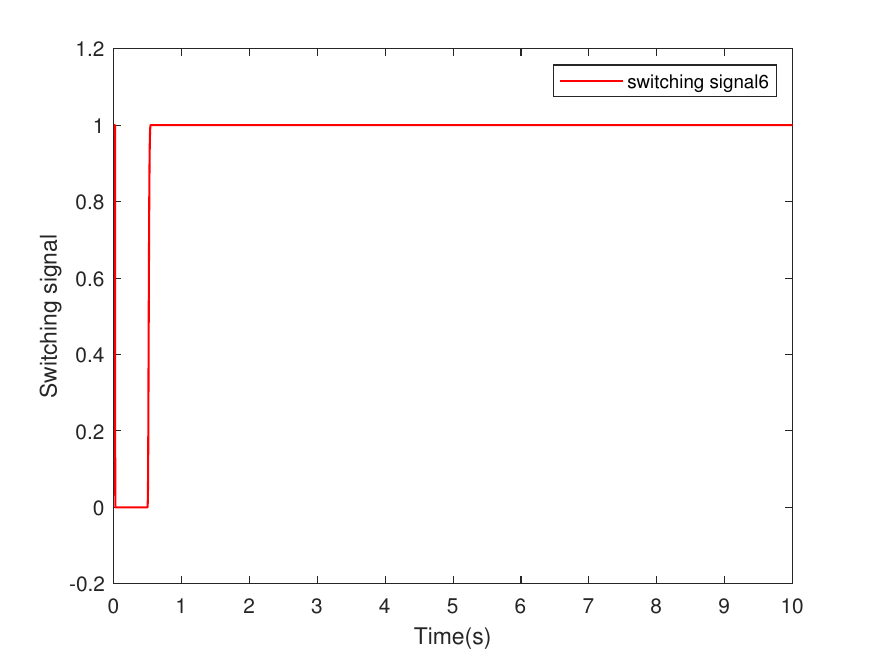}}
	\caption{Results of Method 6}
\end{figure}
\section{Simulation Results}
In this section, the following inverted pendulum model \cite{inverted} is employed to evince validity of the designed control strategies:
\begin{equation}
\begin{aligned}
&{{\dot \rho }_1} = {\rho _2}\\
&{{\dot \rho }_2} = \frac{{{g_e}\sin {\rho _1} - \frac{{{m_a}{l_a}\rho _2^2\cos {\rho _1}\sin {\rho _1}}}{{{m_c} + {m_a}}}}}{{{l_a}\left( {\frac{4}{3} - \frac{{{m_a}{{\cos }^2}{\rho _1}}}{{{m_c} + {m_a}}}} \right)}} + \frac{{\frac{{\cos {\rho _1}}}{{{m_c} + {m_a}}}}}{{{l_a}\left( {\frac{4}{3} - \frac{{{m_a}{{\cos }^2}{\rho _1}}}{{{m_c} + {m_a}}}} \right)}}u + {d_2}
\end{aligned}
\end{equation}
where $g_e = 9.81\, {\rm{ }}{{\rm{m}} \mathord{\left/{\vphantom {{\rm{m}} {{{\rm{s}}^2}}}} \right.\kern-\nulldelimiterspace} {{{\rm{s}}^2}}}$, ${m_c} = 1\,{\rm{ kg}}$, ${m_a} = 0.1\,{\rm{ kg}}$, ${l_a} = 0.5\,{\rm{m}}$, and ${\rho _1}\left( 0 \right) =-0.1\,{\rm{ rad}}$. $y_r=0.2 \sin (t)\,{\rm{ rad}}$.

The parameters of designed FnT controllers are set as ${k_1} = 2,{k_2} = 4,m = 0.6,{p_1} = {p_2} = 0.5,{\gamma _1} = {\gamma _2} = 1,{a_2} = 1,{\beta _{h2}} = 10,{\delta _{21}} = {\delta _{21N}} = 10$, and the parameters of designed FxT controllers are selected as ${k_{11}}={k_{21}} = 2,{k_{12}}={k_{22}} = 4,r = 5/3,{a_2} = 1,{\beta _{h2}} = 10,{\delta _{21}} = {\delta _{21N}} = 10$. The RBFNN embodies 121 nodes with centers distributed evenly in $\left[ { - 0.25,0.25} \right] \times \left[ { - 0.25,0.25} \right]$ and widths are chosen as $2$. The designed controller without composite adaptive laws and the NN-based command-filtered backstepping algorithm [5] are adopted as the comparison approaches, labeled as designed controller-without and NN-CFB, respectively.

The simulation results displayed in Fig. 1-6 (a) manifest that the tracking performance is enhanced via the developed composite learning approaches, and the presented composite FnT/FxT controllers possess faster transient responses. Moreover, the switching signals depicted in Fig. 1-6 (b) reveal that the constructed robust control law has the capability to drive external transients back into the RBFNN's valid domain to ensure the global stability of the designed control strategies. 
\section{Conclusion}
In this article, globally composite adaptive law-based intelligent FnT/FxT control strategies have been developed to cope with the tracking control issue for a class of uncertain strict-feedback nonlinear systems. RBFNNs with novel composite updating weights are constructed to boost tracking performance and decline the NNs' learning parameters simultaneously. By virtue of a newly designed switching mechanism, system states can be brought back when they transcend the neural working region. A modified FnT/FxT backstepping algorithm is established to steer output to track the reference signal. A novel adaptive F-FxT auxiliary variable system is designed to cope with the input issue under known magnitude and rate saturations. For the input subject to unknown magnitude saturation, rate saturation and dead-zone simultaneously, a smooth function is introduced to model the non-smooth actuator nonlinearities, and the unknown parameters are approximated by adopting the proposed global RBFNNs. Stability analysis illuminates that the tracking error tends to an arbitrarily small residual set for finite/fixed time, while all signals in the closed-loop system are globally FnT/FxT bounded. A simulation example elucidates the validity of theoretical results.

\balance
\bibliographystyle{Bibliography/IEEEtranTIE}
\bibliography{Bibliography/IEEEabrv,Bibliography/myRef}\ 

\end{document}